\def\NP{\textsf{NP}}
\def\PP{PP}
\def\sNPc{sNPc}
\newcommand{\tin}{\mathsf{tree} \textnormal{-} \alpha}
\newcolumntype{?}[1]{!{\vrule width #1}}
\newtheorem{definition}{Definition}
\newtheorem{theorem}{Theorem}
\newtheorem{lemm}[theorem]{Lemma}
\newtheorem{coro}[theorem]{Corollary}
\newcolumntype{\expand}{}
\long\@namedef{NC@rewrite@\string\expand}{\expandafter\NC@find}
	\def\problem@arg{#1}%
	\def\problem@framed{framed}%
	\def\problem@lined{lined}%
	\def\problem@doublelined{doublelined}%
	\def\problem@hline{}%
	\def\problem@hline{\hline\hline}%
	\def\problem@hline{\hline}%
	\def\problem@tablelayout{|>{\bfseries}lX|c}%
	\def\problem@title{\multicolumn{2}{|l|}{%
			\raisebox{-\fboxsep}{\textsc{\large #2}}%
	}}%
	\def\problem@tablelayout{>{\bfseries}lXc}%
	\def\problem@title{\multicolumn{2}{l}{%
			\raisebox{-\fboxsep}{\textsc{\large #2}}%
	}}%
\title{Fair Allocation Algorithms for Indivisible Items\\ under Structured Conflict Constraints}
\author[1,2]{Nina Chiarelli}
\author[1,2]{Matja\v z Krnc}
\author[1,2]{Martin Milani\v c}
\author[3]{Ulrich Pferschy}
\author[4]{Joachim Schauer}
\affil[1]{FAMNIT, University of Primorska, Glagolja\v ska 8, 6000 Koper, Slovenia}
\affil[2]{IAM, University of Primorska, Muzejski trg 2, 6000 Koper, Slovenia}
\affil[3]{Department of Operations and Information Systems, University of Graz,\\ Universitaetsstrasse 15/E3, 8010 Graz, Austria}
\affil[4]{FH JOANNEUM, Werk-VI-Straße 46, 8605 Kapfenberg, Austria}
\begin{document}

\maketitle

	\begin{abstract} 
	We consider the fair allocation of indivisible items to several agents with additional conflict constraints.
	These are represented by a conflict graph where each item corresponds to a vertex of the graph and edges in the graph represent incompatible pairs of items which should not be allocated to the same agent.
	This setting combines the issues of \textsc{Partition} and \textsc{Independent Set} and can be seen as a partial coloring of the conflict graph.
In the resulting optimization problem each agent has its own valuation function for the profits of the items. 
	We aim at maximizing the lowest total profit obtained by any of the agents.
	In a previous paper this problem was shown to be strongly \NP-hard for several well-known graph classes, e.g., bipartite graphs and their line graphs.
    On the other hand, it was shown that pseudo-polynomial time algorithms exist for the classes of chordal graphs, cocomparability graphs, biconvex bipartite graphs, and graphs of bounded treewidth.
	In this contribution we extend this line of research by developing pseudo-polynomial time algorithms that solve the problem for the class of convex bipartite conflict graphs, graphs of bounded clique-width, and graphs of bounded tree-independence number.
	The algorithms are based on dynamic programming and also permit fully polynomial-time approximation schemes (FPTAS).
	\end{abstract}

\bigskip
\noindent
\textbf{Keywords:}
    fair division;
    conflict graph; 
    partial coloring; 
    convex bipartite graphs; 
    bounded clique-width;
    bounded tree-independence number.

\bigskip
\noindent
\textbf{Mathematics Subject Classification (2020):} 
90C27,
05C85,
90C47,
91B32,
90C39,
68Q25,
68W25.

\newpage

\section{Introduction}
\label{sec:intro}

Optimization problems arising from allocation of resources or apportionment of goods often contain conditions on the feasibility of a solution which are best represented by a graph.
In particular, incompatibilities between pairs of objects can be modeled in a natural way by means of a {\em conflict graph}. 
There is a wide range of literature where conflict graphs are added to combinatorial optimization problems leading to new problems with feasible solutions consisting of objects whose graph representations are independent sets in the conflict graph, e.g., knapsack problems (\cite{pfsch09,pfsch17}), bin packing (\cite{mimt10}), scheduling (e.g.,~\cite{boja95,raey09}), transportation (\cite{Santos19}) and problems on graphs (e.g.,~\cite{dpsw11}).

In general, optimizing over independent sets will make most optimization problems computationally hard. 
Therefore, a widely pursued research direction consists in the identification of special graph classes where the implied optimization problem exhibits a more benevolent behavior.

We consider a very natural allocation problem where a set of $n$ indivisible goods or items has to be distributed among $k$ agents.
Each agent has his or her own profit function over the set of items.
The profits of items are additive.
To obtain a {\em fair allocation} of items the minimal total profit obtained by any of the agents should be maximized.
Such a maxi-min criterion is probably the most natural fairness measure and can be found frequently in Computational Social Choice.
This problem is well-known as the \emph{Santa Claus} problem (see~\cite{MR2277128})
where presents \hbox{(items)} should be distributed to kids (agents) so that even the least happy kid is as happy as possible.
It is easy to see by a reduction from \textsc{Partition} that this problem is weakly \NP-hard already for $k=2$, even if all $k$ profit functions are identical.
Thus, the best we can hope for, from a computational complexity point of view, are pseudo-polynomial algorithms for restricted classes of input instances.

We consider restrictions on input instances giving rise to special classes expressed in terms of properties of the corresponding conflict graph permitting an incompatibility relation between pairs of items.
Two vertices are joined by an edge if the two corresponding items do not fit together well, e.g., {being substitutes for each other.}
In the resulting optimization problem we are looking for the most fair solution where the allocation to every agent constitutes an independent set in the conflict graph.

\medskip
In the present paper we continue the line of research started in our previous work~\cite{ourarx} where the fair allocation problem with conflicts was first introduced and motivated, and the relation to existing literature was laid out in detail.
In that paper several graph classes were classified as belonging to one of the two sides of the complexity divide, i.e.,\ the problem, when restricted to graphs from the class, is either strongly \NP-hard or permits a pseudo-polynomial algorithm.
In the current paper we add three highly relevant types of graphs to the positive side and thus proceed to narrow the gap between hard and tractable cases.
As in the previous paper~\cite{ourarx}, we consider the number of agents $k$ as a constant.

\begin{sloppypar}
Obviously, the fair division problem with conflicts coincides with the strongly \NP-hard \textsc{Weighted Independent Set} problem (WIS) for the case of a single agent ($k=1$). 
Thus, it only makes sense to consider graph classes where WIS can be solved in polynomial time.
This immediately points to perfect graphs~\cite{MR936633}. 
However, our allocation problem was shown to be strongly \hbox{\NP-hard} for several subclasses of perfect graphs as conflict graphs~\cite{ourarx}, among them the prominent class of bipartite graphs.
On the positive side, we could show that {\em biconvex bipartite graphs} still permit pseudo-polynomial solution algorithms.
In this paper, we manage to push the bar further up by presenting a fairly involved pseudo-polynomial time algorithm for {\em convex bipartite graphs} (see Section~\ref{sec:def} for definitions).
\end{sloppypar}

Outside the area of perfect graphs, we construct  pseudo-polynomial algorithms for conflict graphs of {\em bounded clique-width} or {\em bounded tree-independence number.}
Both of these results extend the only previously known cases of tractable non-perfect graphs, namely graphs of bounded treewidth.

\subsection{Problem statement and results}
\label{sec:problem-statement-and-results}

For a formal definition of our problem let $V$ be a set of items with $|V|=n$, let $k$ be a positive integer, and let $p_1,\ldots,p_k: V \to \mathbb{Z}_+$ be $k$ profit functions.
{An \emph{ordered \hbox{$k$-partition}} of $V$ is a sequence $(X_1,\ldots,X_k)$ of $k$ pairwise disjoint subsets of $V$ such that $\bigcup_{i = 1}^k X_i = V$.}
Given profit functions $p_1,\dots, p_k$, the \emph{satisfaction level} of an ordered $k$-partition $(X_1,\ldots,X_k)$ of $V$ is defined as the minimum over the $k$ individual profits $p_j(X_j) := \sum_{v\in X_j}p_j(v)$, $j\in \{1,\ldots, k\}$.
The standard fair division problem is defined as follows.

\medskip
\noindent\parbox{0.82\linewidth}{\noindent
	{\sc Fair $k$-Division of Indivisible Items}\\[1.2ex]
	\begin{tabular*}{.8\textwidth}{ll}
		\noindent{\bf Input:} & A set $V$ of $n$ items, $k$ profit functions $p_1,\ldots,p_k: V \to \mathbb{Z}_+$.\\[0.5ex]
		\noindent{\bf Task:} & Compute an ordered $k$-partition of $V$ with maximum satisfaction  level.
	\end{tabular*}
}
    
\smallskip
A \emph{conflict graph} $G = (V,E)$ on the set $V$ of items
represents incompatibilities between pairs of items.
If two items $i$ and $j$ are joined by an edge $\{i,j\} \in E$,
then $i$ and $j$ should not be included in the same subset of the partition.
Conflict-free allocation of items immediately leads to (partial) colorings of the conflict graph (cf.~Berge~\cite{MR989117} and de Werra~\cite{MR1097650}).

\begin{definition}
	A \emph{partial $k$-coloring} of a graph $G$ is a sequence $(X_1,\ldots,X_k)$ of $k$ pairwise disjoint independent sets in $G$.
\end{definition}

Obviously, any partial $k$-coloring of a graph $G$ corresponds to a $k$-division of the items respecting the incompatibilities, since every set in the partition is an independent set in $G$.
Note that an optimal partial $k$-coloring $(X_1,\ldots,X_k)$ does not necessarily select all vertices from $V$, i.e., some items may remain unassigned (which can be ruled out for the problem without conflict relations).
Taking the $k$ profit functions into account we can define for each partial $k$-coloring $c = (X_1,\ldots, X_k)$ of a graph $G$ 
a $k$-tuple $(p_1(X_1),\ldots, p_k(X_k))$ as the \emph{profit profile} of $c$.
Then the \emph{satisfaction level} of $c$ is given by $\min_{j=1}^k \{p_j(X_j)\}$, i.e., the minimum profit of a profile.
In this paper the following resulting optimization problem is considered:

\medskip
\noindent\parbox{0.82\linewidth}{\noindent
	{\sc Fair $k$-Division Under Conflicts}\\[1.2ex]
	\begin{tabular*}{.9\textwidth}{ll}
		\noindent{\bf Input:} & A graph $G = (V,E)$, $k$ profit functions $p_1,\ldots,p_k: V \to \mathbb{Z}_+$.\\[0.5ex]
		\noindent{\bf Task:} & Compute a partial $k$-coloring of $G$ with maximum satisfaction level.
	\end{tabular*}
}

\medskip\noindent
We will also refer to the decision version of the problem: 
given a target value $q \in \mathbb{Z}_+$, does there exist a partial $k$-coloring of $G$ with satisfaction level at least $q$?

\medskip
Even without conflicts the plain problem
{\sc Fair $k$-Division of Indivisible Items}
is weakly \NP-hard for any constant $k\geq 2$ and strongly \NP-hard for $k$ being part of the input.
This holds even for $k$ identical profit functions
(see the discussion in~\cite{ourarx}).
Thus, pseudo-polynomial algorithms for 
{\sc Fair $k$-Division Under Conflicts}
can only be developed for constant $k$.

\medskip
In our preceding paper~\cite{ourarx} we derived the following results.\footnote{We will only state the definitions of those graph classes that play a role in this paper.} The decision version of {\sc Fair $k$-Division Under Conflicts} is strongly \NP-complete for any fixed $k\geq 2$ if the conflict graph is a {\em bipartite graph} or a {\em line graph of a bipartite graph}.
Both results imply that the problem is hard on {\em perfect graphs}, where {\sc Weighted Independent Set} is still polynomial.

On the other hand we established pseudo-polynomial solution algorithms for the following graph classes: biconvex bipartite graphs, cocomparability graphs, chordal graphs, and graphs of bounded treewidth.

It was also shown that all these algorithms, which rely on dynamic programs of pseudo-polynomial size, permit the construction of fully polynomial time approximation schemes (FPTAS).

\medskip
In this paper we give, for any fixed positive integer $k$, a pseudo-polynomial algorithm for {\sc Fair $k$-Division Under Conflicts} in the class of {\em convex bipartite graphs} in Section~\ref{sec:convex}, thereby answering a question from~\cite{ourarx}.
Note that this approach is completely different from the previous algorithm for the {\em biconvex} bipartite case.
Secondly, we construct a pseudo-polynomial algorithm for {\sc Fair $k$-Division Under Conflicts} on conflict graphs of bounded clique-width in Section~\ref{sec:bcw}.
Since graph classes of bounded treewidth have bounded clique-width but not vice versa, this is a strict generalization of the analogous result on bounded treewidth from~\cite{ourarx}.
Thirdly, we provide another generalization of the result for bounded treewidth, which at the same time also generalizes the solution for chordal graphs, by adapting the bounded treewidth algorithm to graphs with bounded tree-independence number in Section~\ref{sec:tin}.

The presented algorithms explore the structural properties of the respective graph classes to generate suitable sets of profit profiles as states in a dynamic programming approach.
By the same reasoning as in Section~4 of \cite{ourarx} we can also conclude that each of the dynamic programming algorithms laid out for convex bipartite graphs, graphs of bounded clique-width, and graphs of bounded tree-independence number also leads to a fully polynomial time approximation scheme (FPTAS).

\smallskip
Pointing out further open questions, we would like to mention 
{\em $\chi$-bounded graph classes} (see, e.g.,~\cite{MR4174126}), which are graph classes $\mathcal{G}$ closed under induced subgraphs for which there exists a function $f$ bounding from above the chromatic number of each graph $G\in \mathcal{G}$ in terms of its clique number.
These can be seen as a simultaneous extension of the class of perfect graphs, as well as of graph classes of bounded clique-width (see~\cite{DVORAK2012}) and graphs of bounded tree-independence number (see~\cite{dallard2022firstpaper}).
Narrowing the gap of computational complexity between $\chi$-bounded graph classes and graphs of bounded clique-width, resp.~bounded tree-independence number, would be an interesting challenge.
An overview of the state of knowledge for various graph classes is given in Figure~\ref{fig:Hasse-new}. 

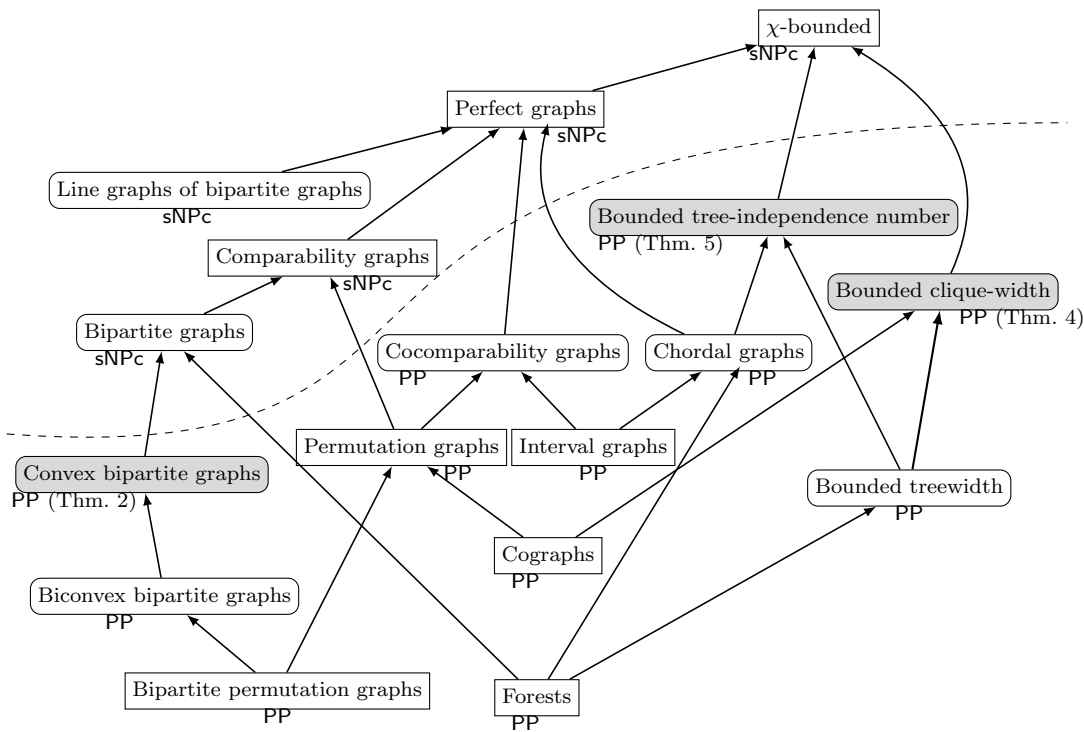
\begin{figure}[h]
	\centering
	\begin{tikzpicture}[xscale=3,yscale=1.8]
	\tikzset{roundedbox/.style={draw,rectangle,rounded corners}}
	\tikzset{cornerbox/.style={draw,rectangle}}
	\scriptsize
	\node[cornerbox] (bipartitePermutation) at (-0.7,-0.4) {\mathstrut Bipartite permutation graphs};
	\node at (-0.7,-0.6) {\mathstrut \textsf{\PP}};
	\node[roundedbox] (biconvexBipartite) at (-1.2,0.3) 
	{\mathstrut Biconvex bipartite graphs};
	\node at (-1.4,0.1) {\mathstrut \textsf{\PP} 
	};
	\node[roundedbox, fill=gray!30] (convexBipartite) at (-1.3,1.2) 
	{\mathstrut Convex bipartite graphs};
	\node at (-1.6,1.0) {\mathstrut \textsf{\PP} (Thm.~\ref{thm:convex})
	};
	\node[roundedbox] (bipartite) at (-1.2,2.25) {\mathstrut Bipartite graphs};
	\node at (-1.4,2.05) {\mathstrut \textsf{\sNPc} 
	};
	\node[cornerbox] (permutation) at (-0.15,1.4) {\mathstrut Permutation graphs};
	\node at (0.1,1.2) {\mathstrut \textsf{\PP}};
	\node[cornerbox] (interval) at (0.7,1.4) {\mathstrut Interval graphs};
	\node at (0.7,1.2) {\mathstrut \textsf{\PP}};
	\node[roundedbox] (cocomparability) at (0.3,2.1) {\mathstrut Cocomparability graphs};
    \node at (-0.1,1.9) {\mathstrut \textsf{\PP} 
    };
    \node[roundedbox] (chordal) at (1.3,2.1) {\mathstrut Chordal graphs};
    \node at (1.45,1.9) {\mathstrut \textsf{\PP} 
    };
    \node[cornerbox] (cographs) at (0.5,0.6) {\mathstrut Cographs};
	\node at (0.4,0.4) {\mathstrut \textsf{\PP}};
	
	\node[cornerbox] (comparability) at (-0.5,2.8) {\mathstrut Comparability graphs};
	\node at (-0.3,2.6) {\mathstrut \textsf{\sNPc}};
	\node[cornerbox] (perfect) at (0.4,3.9) {\mathstrut Perfect graphs};
	\node at (0.65,3.7) {\mathstrut \textsf{\sNPc}};
	\node[roundedbox] (lineBip) at (-1.0,3.3) {\mathstrut Line graphs of bipartite graphs};
	\node at (-1.1,3.1) {\mathstrut \textsf{\sNPc} 
	};
	\node[roundedbox] (bddtw) at (2.1,1.1) {\mathstrut Bounded treewidth};
	\node at (2.1,0.9) {\mathstrut \textsf{\PP} 
	};
        \node[roundedbox, fill=gray!30] (bdtin) at (1.5,3.1) {\mathstrut Bounded tree-independence number};
 	\node at (1,2.9){\mathstrut \textsf{\PP} (Thm.~\ref{thm:boundedtree-alpha})};
	\node[roundedbox, fill=gray!30] (bddcw) at (2.25,2.55) {\mathstrut Bounded clique-width};
 	\node at (2.6,2.35){\mathstrut \textsf{\PP} (Thm.~\ref{thm:boundedcliquewidth})};
	\node[cornerbox] (chibnd) at (1.7,4.5) {\mathstrut $\chi$-bounded};
  	\node at (1.5,4.3){\mathstrut\textsf{\sNPc}};
 	
	\node[cornerbox] (forests) at (0.45,-0.45) {\mathstrut Forests};
    \node at (0.4,-0.65) {\mathstrut \textsf{\PP}};
	\draw[semithick,-latex,font=\sffamily] (perfect) to (chibnd);
	\draw[semithick,-latex,bend right,font=\sffamily] (bddcw) to (chibnd);	
	\draw[semithick,-latex,font=\sffamily] (bipartitePermutation) to (biconvexBipartite);
	\draw[semithick,-latex,font=\sffamily] (biconvexBipartite) to (convexBipartite);
	\draw[semithick,-latex,font=\sffamily] (convexBipartite) to (bipartite);
	\draw[semithick,-latex,font=\sffamily] (bipartite) to (comparability);
	\draw[semithick,-latex,font=\sffamily] (interval) to (cocomparability);
	\draw[semithick,-latex,font=\sffamily] (interval) to (chordal);
	\draw[semithick,-latex,bend left,font=\sffamily] (chordal) to (0.5,3.8);
	\draw[semithick,-latex,font=\sffamily] (permutation) to (comparability);
	\draw[semithick,-latex,font=\sffamily] (permutation) to (cocomparability);
	\draw[semithick,-latex,font=\sffamily] (comparability) to (perfect);
	\draw[semithick,-latex,font=\sffamily] (cocomparability) to (perfect);
	\draw[semithick,-latex,font=\sffamily] (bipartitePermutation) to (permutation);
	\draw[semithick,-latex,font=\sffamily] (lineBip) to (perfect);
	\draw[semithick,-latex,font=\sffamily] (cographs) to (permutation);
	\draw[semithick,-latex,font=\sffamily] (cographs) to (bddcw);
	\draw[semithick,-latex,font=\sffamily] (forests) to (1.35,2);
	\draw[semithick,-latex,font=\sffamily] (forests) to (bipartite);
	\draw[thick,-latex,font=\sffamily] (bddtw) to (bddcw);l
	\draw[semithick,-latex,font=\sffamily] (forests) to (bddtw);
        \draw[semithick,-latex,font=\sffamily] (chordal) to (bdtin);
        \draw[semithick,-latex,font=\sffamily] (bddtw) to (bdtin);
        \draw[semithick,-latex,font=\sffamily] (bdtin) to (chibnd);
	\draw[dashed] (-1.9,1.5) .. controls (0.5,1.2) and (-1,3.8) .. (2.8,3.8);
	\end{tikzpicture}
	\caption{Relationships between various graph classes and the complexity of \textsc{Fair $k$-Division Under Conflicts} {(decision version)}. An arrow from a class $\mathcal{G}_1$ to a class $\mathcal{G}_2$ means that every graph in $\mathcal{G}_1$ is also in $\mathcal{G}_2$. Label `\textsf{\PP}' means that {for each fixed $k$} the problem is solvable in pseudo-polynomial time in the given class, and label `\textsf{\sNPc}' means that {for each fixed $k\ge 2$} the problem is strongly \NP-complete.
	Graph classes with results in the current paper are given in gray.
   All results from~\cite{ourarx} are shown with round corners and no color.
	Results depicted in rectangles follow from the inclusion of graph classes.
	Note that for the case of $\chi$-bounded graph classes, we only claim strong \NP-completeness for classes containing either the class of all bipartite graphs or the class of all line graphs of bipartite graphs.}\label{fig:Hasse-new}
\end{figure}

\subsection{Definitions and notation}
\label{sec:def}

For a positive integer $k$, we denote by $[k]$ the set $\{1,\ldots, k\}$.
All graphs considered in this paper are finite, simple, and undirected. 
A vertex in a graph $G$ is said to be \emph{isolated} if it has no neighbors. An \emph{independent set} in a graph $G$ is a set of pairwise nonadjacent vertices. 
For a graph $G = (V,E)$ and a set $X\subseteq V$, we denote by $G[X]$ the \emph{subgraph of $G$ induced by $X$}, that is, the graph with vertex set $X$ in which two vertices are adjacent if and only if they are adjacent in $G$. 

A bipartite graph $G=(A\cup B, E)$ is {\em convex} if the vertices in $A$ can be linearly ordered as $(a_1,\ldots, a_s)$ so that for every vertex $b\in B$, the neighbors of $b$ form a consecutive interval of vertices in $A$ (see, e.g.,~\cite{lipski1981efficient}).
Such an ordering of $A$ can be found in linear time using PQ-trees~\cite{bolu76}.

A bipartite graph is {\em biconvex} if both vertex sets $A$ and $B$ can be linearly ordered such that for every vertex $a \in A$ (resp.\ $b \in B$) the neighbors of $a$ (resp.\ neighbors of $b$) 
form a consecutive interval in $B$ (resp.\ in $A$).
Informally, a biconvex graph is convex on both sides.

\section{Convex bipartite graphs}
\label{sec:convex}

We will first derive our main result for the case when the conflict graph $G$ is connected. 
Later we will refer to \cite[Lemma~13]{ourarx} to show that profit profiles determined during the execution of this algorithm for every connected component (as well as for isolated vertices) can be merged together for a solution on general graphs.

It will be useful to state the following observation for the solution of {\sc Fair $k$-Division of Indivisible Items}, which can be seen as an instance of {\sc Fair $k$-Division Under Conflicts} with an edgeless conflict graph $G$.

Given an instance $I = (G,p_1,\ldots, p_k)$ of \textsc{Fair $k$-Division Under Conflicts}, a \emph{profit profile} for $I$ is a $k$-tuple $(q_1,\ldots, q_k)\in \mathbb{Z}_+^k$ such that there exists a partial $k$-coloring $(X_1,\ldots, X_k)$ of $G$ for which $q_j = p_j(X_j)$ for all $j\in [k]$.

\begin{sloppypar}
\begin{lemm}\label{thm:edgeless}
For every $k\ge 1$, the set of profit profiles for a given instance $(G,p_1,\ldots, p_k)$ of \textsc{Fair $k$-Division Under Conflicts} with \hbox{edgeless} conflict graph $G$ can be computed in time $\mathcal{O}(n(Q+1)^k)$, where $n = |V(G)|$ and $Q = \max_{1\le j\le k}p_j(V(G))$.
\end{lemm}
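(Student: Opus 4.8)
The plan is to set up a straightforward dynamic program that sweeps through the vertices one at a time. The key observation is that when $G$ is edgeless, \emph{every} subset of $V$ is independent, so a partial $k$-coloring $(X_1,\ldots,X_k)$ is nothing but a sequence of $k$ pairwise disjoint subsets of $V$; equivalently, each vertex is independently either left unassigned or placed into exactly one of the $k$ classes, yielding $k+1$ options per vertex. Consequently every profit profile $(q_1,\ldots,q_k)$ satisfies $0 \le q_j \le p_j(V) \le Q$ for each $j \in [k]$, so the set of all profit profiles is contained in the grid $\{0,1,\ldots,Q\}^k$ of size $(Q+1)^k$.

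First I would fix an arbitrary ordering $v_1,\ldots,v_n$ of the vertices and, for $0 \le i \le n$, maintain a Boolean array $T_i$ indexed by $\{0,\ldots,Q\}^k$ whose entry at position $(q_1,\ldots,q_k)$ records whether there is a partial $k$-coloring of $G[\{v_1,\ldots,v_i\}]$ realizing exactly this profile. The base case $T_0$ marks only the all-zeros profile $(0,\ldots,0)$ as realizable. For the transition from $T_{i-1}$ to $T_i$, I would scan every profile $(q_1,\ldots,q_k)$ marked in $T_{i-1}$ and propagate it under the $k+1$ choices for $v_i$: either copy it unchanged (leaving $v_i$ unassigned), or, for each $j \in [k]$, mark the profile obtained by replacing $q_j$ with $q_j + p_j(v_i)$ (placing $v_i$ into class $j$). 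Since $q_j + p_j(v_i) \le p_j(\{v_1,\ldots,v_i\}) \le p_j(V) \le Q$, every profile produced stays inside the grid, so no index falls out of range.

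Correctness follows by a routine induction on $i$: assuming $T_{i-1}$ records exactly the profiles realizable on $\{v_1,\ldots,v_{i-1}\}$, the transition enumerates precisely the ways a partial $k$-coloring on $\{v_1,\ldots,v_i\}$ arises from one on $\{v_1,\ldots,v_{i-1}\}$ by deciding the fate of $v_i$, so $T_i$ records exactly the profiles realizable on $\{v_1,\ldots,v_i\}$. Hence $T_n$ is the desired set of profit profiles for the instance. For the running time, each of the $n$ transitions visits the $(Q+1)^k$ entries of the array and performs $O(k)$ constant-time updates per entry; since $k$ is a fixed constant, this gives $\mathcal{O}(n(Q+1)^k)$ overall, as claimed.

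The statement is elementary and there is essentially no hard step; the only points worth stating explicitly are that one must track \emph{sets} of profiles compactly as a bounded grid rather than enumerating the exponentially many partial colorings, and that capping each coordinate at $Q$ loses no realizable profile (which is exactly the bound $p_j(X_j)\le p_j(V)\le Q$). These, together with the observation that constant $k$ absorbs the per-entry work, are the only subtleties I expect to call out.
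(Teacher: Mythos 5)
Your proposal is correct and matches the paper's own argument: the paper likewise sweeps through the vertices, maintaining the set of realizable profit profiles (implicitly including the ``leave $v_i$ unassigned'' option by retaining old profiles) and extending each profile by $p_j(v_i)$ in coordinate $j$, with the same $\mathcal{O}(n(Q+1)^k)$ bound from the grid $\{0,\ldots,Q\}^k$ and constant $k$. Your Boolean-array formulation and explicit induction are just a slightly more formal rendering of the same dynamic program.
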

\end{sloppypar}

\begin{proof}
Let $V=\{v_1, \ldots, v_n\}$ and $\Pi$ be the set of all profit profiles for the given instance.
Observe that $\Pi$ is initialized by the singleton consisting only of the all-zero profit profile.
Then we iterate over the vertices of $G$ as follows.
For every $i$ from $1$ to $n$ we consider every profit profile in $\Pi$ and generate $k$ new profit profiles from it by adding a profit of $p_j(v_i)$ to the current entry at position $j$ for every $j\in[k]$.
These $k |\Pi|$ new profit profiles are added to the current set $\Pi$ (trivially ignoring duplications).
Altogether there can be at most $\mathcal{O}((Q+1)^k)$ different profit profiles with $Q = \max_{1\le j\le k}p_j(V)$.
Thus, in each of the $n$ iterations at most $\mathcal{O}((Q+1)^k)$ profit profiles have to be considered to generate $k$ (i.e., constantly many) candidates for the new profit profiles.
\end{proof}

The main theorem of this section gives a pseudo-polynomial algorithm for connected convex graphs.
{Our algorithm is based on an approach similar to that of D\'iaz, Diner, Serna, and Serra~\cite{Diaz2021}, relying on certain structural properties of convex bipartite graphs.}

\begin{theorem}\label{thm:convex}
For every $k\ge 1$, \textsc{Fair $k$-Division Under Conflicts} is solvable in time $\mathcal{O}(n^{3k+1}(Q+1)^{2k})$ for connected $n$-vertex convex bipartite conflict graphs $G$, where $Q=\max_{1\le j\le k}p_j(V(G))$.
\end{theorem}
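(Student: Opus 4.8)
The plan is to exploit the interval structure of the $B$-side of a convex bipartite graph and reduce the construction of a partial $k$-coloring to a single left-to-right sweep over the ordered $A$-side. Fix the convex ordering $A=(a_1,\dots,a_s)$, and for each $b\in B$ let $[\ell_b,r_b]$ be the interval of indices with $N(b)=\{a_{\ell_b},\dots,a_{r_b}\}$. The crucial structural observation is that, since $G$ is bipartite, $B$ is itself an independent set; hence once the assignment of the $A$-vertices to color classes $X_1,\dots,X_k$ has been fixed, each $b\in B$ may be placed into some class $j$ (or left unassigned) \emph{independently} of the other $B$-vertices, subject only to the constraint that $b$ has no neighbor in $X_j$. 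By convexity, $b$ conflicts with class $j$ precisely when some $A$-vertex with index in $[\ell_b,r_b]$ has been assigned to $X_j$.

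First I would set up a dynamic program whose states record, for a prefix $a_1,\dots,a_i$ of $A$ together with all already-resolved $B$-vertices, two pieces of data: a profit profile $q=(q_1,\dots,q_k)\in\{0,\dots,Q\}^k$, and a vector $(t_1,\dots,t_k)$ where $t_j\in\{0,1,\dots,i\}$ is the largest index $p\le i$ with $a_p\in X_j$ (and $t_j=0$ if class $j$ has so far received no $A$-vertex). The sweep processes $i=1,\dots,s$: at position $i$ it branches on the placement of $a_i$ (into one of the $k$ classes, or unassigned), updating the relevant $t_j$ to $i$ and adding $p_j(a_i)$ to $q_j$; it then \emph{resolves} every $b$ with $r_b=i$, for each such $b$ branching over the admissible choices (unassigned, or any class $j$ with $t_j<\ell_b$) and adding $p_j(b)$ to $q_j$ when $b$ is placed into class $j$. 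After the last position the algorithm reads off, from all surviving states, the set of achievable profit profiles; the optimum is $\max\min_j q_j$ over this set, and an optimal coloring is recovered by standard backpointers.

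The heart of the correctness argument — and the step I expect to require the most care — is to justify that a single ``most recent index'' $t_j$ per class is a \emph{sufficient} summary of the otherwise unbounded history of $A$-placements. Concretely, I would prove the invariant that the states present after position $i$ are exactly the pairs $(q,t)$ arising from some partial $k$-coloring of $G[\{a_1,\dots,a_i\}\cup\{b:r_b\le i\}]$. The forward direction is routine; the consistency claim is that a $B$-vertex $b$ with $r_b=i$ conflicts with class $j$ if and only if $t_j\ge\ell_b$. Indeed, every $A$-vertex in $X_j$ sits at an index $\le t_j$, so if $t_j<\ell_b$ then none of them lies in $[\ell_b,r_b]=[\ell_b,i]$, whereas if $t_j\ge\ell_b$ then $a_{t_j}$ itself is a class-$j$ neighbor of $b$. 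Because $b$'s whole neighborhood lies in $[1,i]$ and $b$ has no neighbor at any later index, this test is both necessary and sufficient and placing $b$ never constrains later $A$-choices; this is exactly where convexity (each neighborhood closes by index $r_b$) and bipartiteness (no $B$–$B$ edges) are used.

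Finally, the running time follows by counting states: at each of the $s\le n$ positions there are at most $(n+1)^k$ choices for $(t_1,\dots,t_k)$ and at most $(Q+1)^k$ profit profiles, while each transition costs $O(k)$ per resolved vertex, and the resolved vertices sum to $|B|\le n$ over the whole sweep. This yields a bound of $O\!\left(k\,n^{k+1}(Q+1)^k\right)$, which lies within the claimed $O(n^{3k+1}(Q+1)^{2k})$. The algorithm does not use connectivity, so it already solves \textsc{Fair $k$-Division Under Conflicts} on arbitrary convex bipartite graphs; for the connected case it additionally outputs the full set of attainable profit profiles — with vertices of empty neighborhood (arising only in the disconnected setting) handled as in the edgeless instance of Lemma~\ref{thm:edgeless} — which is precisely the object needed to merge components via \cite{ourarx} and to derive the FPTAS.
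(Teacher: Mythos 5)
Your proof is correct, and while it rests on the same structural insight as the paper's proof, its algorithmic organization is genuinely different and simpler. The shared core is the observation that, since each $N(b)$ is an interval $[\ell_b,r_b]$ in the ordering of $A$ that ``closes'' at index $r_b$, the only information about the history of $A$-placements that a color class $j$ ever needs is the largest index of an $A$-vertex assigned to it: your threshold test ($b$ conflicts with class $j$ iff $t_j\ge \ell_b$, tested at position $i=r_b$, where necessity uses $a_{t_j}\in[\ell_b,r_b]$ and sufficiency uses that no later $A$-vertex is a neighbor of $b$) is exactly the role played by the guessed tuples $(i_1,\ldots,i_k)$ in the paper. The difference is in how this is deployed. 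The paper partitions the sweep into blocks $G_1\subsetneq\cdots\subsetneq G_r$ determined by the distinct right endpoints $u_1<\cdots<u_r$, and within each block it must additionally guess, for each color, the smallest-indexed new $B$-vertex $m_\ell$ (the compatible tuples $\mathcal{M}_j$), exploit the nestedness of neighborhoods inside a block to reduce the interior assignments to an edgeless instance (Lemma~\ref{thm:edgeless}), and paste everything together via the combination formula~\eqref{eq:final}, whose correctness requires a separate two-directional argument. Your vertex-by-vertex sweep, resolving each $b$ exactly at position $r_b$, makes the entire $B$-side guessing layer, the edgeless-instance subroutine, and the profile-merging step unnecessary: once resolved, a $B$-vertex never interacts with future decisions (bipartiteness kills $B$--$B$ edges, convexity kills edges to later $A$-vertices), so the state $(q,t)$ is Markovian, which is the point your invariant correctly isolates and verifies. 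This buys you both a cleaner correctness argument and a sharper running time, $\mathcal{O}\bigl(k\,n^{k+1}(Q+1)^k\bigr)$ versus the paper's $\mathcal{O}\bigl(n^{3k+1}(Q+1)^{2k}\bigr)$, comfortably within the claimed bound; your closing remarks on reading off the full set of attainable profit profiles (as needed for the component-merging step and the FPTAS) and on handling isolated vertices in the disconnected case are also accurate.
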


{Before proving Theorem~\ref{thm:convex} we introduce some notation.}
Let $G = (A\cup B,E)$ be a connected convex bipartite graph, with vertices in $A$ linearly ordered as $(a_1,\ldots, a_s)$ so that for every vertex $b\in B$, the neighbors of $b$ form a consecutive interval of vertices in $A$.
Recall that such an ordering of $A$ can be found in linear time using PQ-trees~\cite{bolu76}.
Furthermore, we assume that $G$ has at least two vertices. 
In particular, since $G$ is connected, this implies that $G$ does not have any isolated vertices, and sets $A$ and $B$ are both non-empty.

For each $b\in B$, let $b^-$ and $b^+$ be the two elements of $[s]$ such that 
$N_G(b) = \{a_j\mid b^-\le j\le b^+\}$. 
Note that since $G$ has no isolated vertices, 
$b^-$ and $b^+$ are well-defined, but may coincide.
We may also assume that vertices of $B$ are sorted linearly as $(b_1,\ldots, b_t)$ so that  for all $1\le i<j\le t$, we have $b_i^+\le b_j^+$, and if  $b_i^+= b_j^+$, then $b_i^-\le b_j^-$. In other words, we consider vertices of $B$ to be ordered non-decreasingly with respect to the larger endpoints of the intervals representing their neighborhoods, and in case of a tie, they are ordered non-decreasingly with respect to their smaller endpoints.
See Fig.~\ref{fig:example} for an example.

\begin{figure}[h]
\begin{center}
\includegraphics[width=0.7\textwidth]{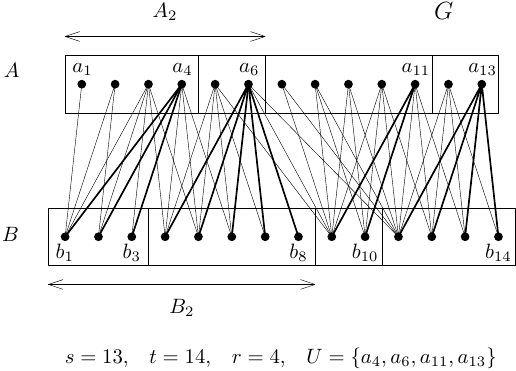}

\bigskip

\begin{tabular}{|c?{0.6mm}ccc?{0.6mm}ccccc?{0.6mm}cc?{0.6mm}cccc|}
  \hline
  $i$ & $1$ & $2$ & $3$ & $4$ & $5$ & $6$ & $7$ & $8$ & $9$ & $10$ & $11$ & $12$ & $13$ & $14$\\
  \hline
  $b_i^-$ & $1$ & $2$ & $3$ & $3$ & $3$ & $4$ & $5$ & $6$ & $5$ & $8$ & $6$ & $10$ & $11$ & $12$\\
  \hline
  $b_i^+$ & $4$ & $4$ & $4$ & $6$ & $6$ & $6$ & $6$ & $6$ & $11$ & $11$ & $13$ & $13$ & $13$ & $13$\\
  \hline
\end{tabular}

\bigskip 

\begin{tabular}{|c?{0.6mm}cccc|}
  \hline
  $j$ & $1$ & $2$ & $3$ & $4$ \\
  \hline
  $u_j$ & $4$ & $6$ & $11$ & $13$ \\
    \hline
  $v_j$ & $3$ & $8$ & $10$ & $14$ \\
  \hline
\end{tabular}
\caption{An example of a convex bipartite graph and the corresponding indices and sets as used in the proof.}\label{fig:example}
\end{center}
\end{figure}

First, we compute the set $U\subseteq A$ of the larger endpoints of the neighborhoods of vertices in $B$, that is, $U  = \{a_{b^+}\mid b\in B\}$, and sort the elements of $U$ increasingly by their index: $U = \{a_{u_1}, \ldots, a_{u_r}\}$ such that $u_1<\ldots < u_r$. 
(Note that $u_r = s$ since $G$ is connected.)
For all $j\in [r]$, we set: 
\begin{align*}
    A_j &= \{a_i\in A\mid 1\le i\le u_j\},\\
    B_j &= \{b\in B\mid b^+\le u_j\},\,\text{and}\\
    G_j &= G[A_j\cup B_j].
\end{align*}

The graphs $G_1,\ldots, G_r$ form an increasing sequence (with respect to the induced subgraph relation) of induced subgraphs of $G$ such that $G_r = G$.
For each $j\in [r]$, analogously to $u_j$, we also define $v_j\in [t]$ to
be the index such that 
 $B_j = \{b_1, \ldots, b_{v_j}\}$.
Note that $v_1<\ldots <v_r$.
We now point out some immediate but important facts regarding the above notions:
\begin{enumerate}[label=(\roman*)]
   \item $A_1\subsetneq \ldots \subsetneq A_r$, 
 $B_1\subsetneq \ldots \subsetneq B_r$.
  \item $A_j\setminus A_{j-1} = \{a_{u_{j-1}+1},\ldots, a_{u_j}\}$ and  $B_j\setminus B_{j-1} = \{b_{v_{j-1}+1},\ldots, b_{v_j}\}$. 
  \item 
  For all $i,i'\in \{u_{j-1}+1,\ldots, u_j\}$ with $i<i'$, we have $N_{G_j}(a_{i})\subseteq N_{G_j}(a_{i'})\subseteq B_j\setminus B_{j-1}$.\label{item:convex-4}
\item 
  For all $i,i'\in \{v_{j-1}+1,\ldots, v_j\}$ with $i<i'$, we have $N_{G_j}(b_{i'})\subseteq N_{G_j}(b_i)$.\label{item:convex-3}
\end{enumerate}

{After these notational preparations we proceed with the proof of Theorem \ref{thm:convex}. 
}

\begin{proof}[Proof of Theorem~\ref{thm:convex}]

{The proof of the theorem is constructive and relies on dynamic programming.
Let us start with a high-level intuitive explanation:
We develop a dynamic programming algorithm that considers a number of restricted subproblems on graphs $G_j$ for increasing~$j$.
For each graph $G_j$, $j\in \{1,\dots,r\}$ (as defined above) we \emph{guess} (by enumerating all possibilities) the largest indexed vertex in $V(G_j)\cap A = A_j$ for every color.
This means that we go through all $k$-tuples 
$(i_1,\ldots, i_k)\in \{0,1,\ldots,u_j\}^k$ and compute the set $\Pi_j(i_1,\ldots, i_k)$, defined as the set of all profit profiles (see next paragraph for the definition) of $G_j$ such that each agent $\ell\in [k]$ is assigned vertex $a_{i_\ell}$ and possibly other vertices from $\{a_1, \ldots, a_{i_\ell-1}\} \cup B_j$.
As the set of all profit profiles of partial $k$-colorings of $G$ equals the union, over all $(i_1,\ldots, i_k)\in \{0,1,\ldots, u_r\}^k$, of the sets $\Pi_r(i_1,\ldots, i_k)$, this will give us a solution for our problem.
}

\paragraph{{Definition of profit profiles $\Pi_j (i_1,\dots,i_k)$.}}
\begin{sloppypar}
For each $j\in \{1,\dots,r\}$ and each $k$-tuple $(i_1,\ldots, i_k)\in \{0,1,\dots,u_j\}^k$, the set $\Pi_j(i_1,\ldots, i_k)$ of profit profiles is defined to be the set of all $k$-tuples $(q_1,\ldots, q_k)\in \mathbb{Z}_+^k$ such that there exists a partial $k$-coloring $(X_1,\ldots, X_k)$ of $G_j$ with the following property:
for all $\ell\in [k]$, we have $q_\ell = p_\ell(X_\ell)$ and
\begin{equation}\label{eq0}
i_\ell = \left\{
  \begin{array}{ll}
    \max\{u \mid a_u\in X_\ell \}, & \hbox{if $X_\ell\cap A\neq \emptyset$;} \\
    0, & \hbox{if $X_\ell\cap A = \emptyset$}.
  \end{array}
\right.
\end{equation}
\end{sloppypar}


Note that for each $\ell\in [k]$, the possible values of the $\ell$-th coordinate of any member of $\Pi_j(i_1,\ldots, i_k)$ belong to the set $\{0,1,\ldots, Q\}$ where $Q = \max_{1\le j\le k}p_j(V(G))$. Thus, each set $\Pi_j(i_1,\ldots, i_k)$ has at most $(Q+1)^k$ elements. 
Note also that for each fixed $j\in [r]$, the total number of sets $\Pi_j(i_1,\ldots, i_k)$ is of the order
$\mathcal{O}(n^{k})$. 
The whole remainder of the proof deals with the explanation of how to compute the sets $\Pi_j(i_1,\ldots, i_k)$ for each $j\in [r]$ and each $k$-tuple $(i_1,\ldots, i_k)\in \{0,1,\ldots, u_j\}^k$. 

\paragraph{{Computation of profit profiles for $j=1$.}}
{In the initial case $G_1$ is the subgraph of $G$ induced by $A_1\cup B_1$,} where $A_1 = \{a_1,\ldots, a_{u_1}\}$ and
$B_1 = \{b_1,\ldots, b_{v_1}\}$. 
Recall from properties~\ref{item:convex-4} and \ref{item:convex-3} that 
$N_{G_1}(a_{i})\subseteq N_{G_1}(a_{i'})$ for all $i,i'\in [u_1]$ with $i<i'$,
and similarly
$N_{G_1}(b_{i'})\subseteq N_{G_1}(b_i)$ for all $i,i'\in [v_1]$ with $i<i'$.
Consider a $k$-tuple $(i_1,\ldots, i_k)\in \{0,1,\ldots, u_1\}^k$. We may assume that 
$i_{\ell} = i_{\ell'}$ if and only if  $\ell = \ell'$ or $i_{\ell} = i_{\ell'} = 0$,
since vertex $a_{i_\ell}$ can be assigned at most one color; in other words, for all $k$-tuples $(i_1,\ldots, i_k)$ not satisfying the above condition, we clearly have $\Pi_1(i_1,\ldots, i_k) = \emptyset$.
Note also that for each  $\ell\in [k]$ such that $i_\ell> 0$,
vertex $a_{i_\ell}$ must be assigned to $X_\ell$, the $\ell$-th color class,
no vertex in $A_1$ with index larger than $i_\ell$ is assigned to $X_\ell$, and, moreover,
no vertex in $B_1$ adjacent to $a_{i_\ell}$ can be assigned to $X_\ell$.
To consider possible additional assignments of vertices to $X_\ell$ we define $I_\ell$ as the set containing 
vertices in $B_1$ nonadjacent to $a_{i_\ell}$ and vertices in $A_1$ with index strictly smaller than $i_\ell$.

\medskip
\noindent\textbf{Observation.} 
$I_\ell$ is an independent set in $G_1$.

\begin{proof}
Suppose by contradiction that $a_ib_{i'}\in E(G_1)$, with $i<i_\ell$ and $a_{i_\ell}b_{i'}\notin E(G_1)$. 
By construction of $G_1$ we clearly have $a_{u_1}b_{i'}\in E(G_1)$.
But since $i<{i_\ell}\le {u_1}$, this violates the assumption that the neighbors of $b_{i'}$ form a consecutive interval of vertices in $A$.
\end{proof}

\begin{sloppypar}
Set $Y_1 = \{a_{i_\ell}\mid 1\le \ell\le k \wedge i_\ell> 0\} \subseteq A_1$.
Now notice that the remaining elements of $X_\ell$ can be any of the vertices in $I_\ell\setminus Y_1$.
The reasoning is similar for all $\ell\in [k]$ where $i_\ell= 0$ since in this case no vertex in $A_1$ is allowed to be assigned to $X_\ell$, but any vertex in $B_1$ can be.
We thus define, for all $\ell\in [k]$, 
a modified profit function $p'_\ell$ on the vertices of $G_1 \setminus Y_1$
which, informally speaking, assigns the original profits only to vertices in $I_\ell$.
Formally, we set for all $v\in V(G_1 \setminus Y_1)$:
$$p'_\ell(v) = \left\{
                 \begin{array}{ll}
                   0, & \hbox{if $i_\ell=0$ and $v\in A_1$;}\\ 
                   0, & \hbox{if $i_\ell>0$ and $v\in \{a_{i_\ell+1},\ldots, a_{u_1}\}\cup N_{G_1}(a_{i_\ell})$;} \\
                   p_\ell(v), & \hbox{otherwise.}
                 \end{array}
               \right.
$$
Note that for each $\ell$, no two vertices of $G_1\setminus Y_1$ with strictly positive value of the modified profit function $p_\ell'$ are adjacent.
Thus, in order to compute the set $\Pi_1'$ of all possible profit profiles with respect to the modified profits $(p_1',\ldots, p_k')$ on the set $V(G_1)\setminus Y_1$, we can completely ignore the edges, and hence this set can be computed using Lemma~\ref{thm:edgeless}.
Adding the profit implied by index $i_\ell$ we obtain
$$\Pi_1(i_1,\ldots, i_k) = \{(q_1,\ldots, q_k) \mid (q_1',\ldots, q_k') \in \Pi_1'\}$$
where $$q_\ell =  \left\{
                   \begin{array}{ll}
                     q_\ell'+p_\ell(a_{i_\ell}), & \hbox{if $i_\ell>0$;} \\
                     q_\ell', & \hbox{otherwise.}
                   \end{array}
                 \right.$$
This computation is carried out for each of the 
${\mathcal O}(n^k)$ $k$-tuples $(i_1,\ldots, i_k)\in \{0,1,\ldots, u_1\}^k$.
\end{sloppypar}

\paragraph{{Computation of profit profiles for $j>1$.}}
After settling the case $j=1$, we now consider $j > 1$ and assume that all the relevant sets $\Pi_{j-1}(i_1',\ldots, i_k')$ were already computed.

\begin{sloppypar}
The main loop for a fixed value of $j$ considers an arbitrary but fixed $k$-tuple \hbox{$(i_1,\ldots, i_k)\in\{0,1,\ldots, u_j\}^k$}.
Again, we may assume that $i_{\ell} = i_{\ell'}$ if and only if  $\ell = \ell'$ or $i_{\ell} = i_{\ell'} = 0$,
since otherwise we again have $\Pi_j(i_1,\ldots, i_k) = \emptyset$.
\end{sloppypar}

The main idea is a systematic structuring of the contributions from $B_j\setminus B_{j-1}$.
Therefore, we first guess the smallest indexed vertices in $B_j\setminus B_{j-1}$ that will belong to each color class $X_\ell$ (of course, this guess has to be performed for every $k$-tuple $(i_1,\ldots, i_k)$).
More formally, we consider all possible $k$-tuples $(m_1,\ldots, m_k)\in \{v_{j-1}+1,\ldots, v_j,\infty\}^k$, 
where $\infty$ is a dummy symbol such that $t<\infty$.
To prevent assigning the same vertex to two agents, we restrict ourselves to the $k$-tuples $(m_1,\ldots, m_k)$ for which $m_{\ell} = m_{\ell'}$ if and only if  $\ell = \ell'$ or $m_{\ell} = m_{\ell'} = \infty$.
We say that such a $k$-tuple $(m_1,\ldots, m_k)$ is {\it compatible with} the current $k$-tuple $(i_1,\ldots, i_k)$ from the $A$-side
 if for all $\ell\in [k]$ such that 
$i_\ell>0$ and $m_\ell<\infty$, vertices $a_{i_\ell}$ and $b_{m_\ell}$ are nonadjacent in $G_j$ (or, equivalently, in $G$).
We denote by {$\mathcal{M}_j = \mathcal{M}_j(i_1,\ldots, i_k)$} the set of all such
$k$-tuples $(m_1,\ldots, m_k)$ compatible with the current guess $(i_1,\ldots, i_k)$.
{For simplicity of notation, in the rest of the proof we do not explicitly write the dependency on the current guess $(i_1,\ldots, i_k)$.} 

Next, we define the following set:
\begin{align*}
    Y_j &= \{a_{i_\ell}\mid 1\le \ell\le k \wedge i_\ell> 0\}\setminus A_{j-1},
    \end{align*}
and, for each $k$-tuple $(m_1,\ldots, m_k)\in \mathcal{M}_j$, a set $Z_j$ (analogous to $Y_j$ on the $B$-side) and with it the resulting graph $G_j'$ representing vertices added in the current iteration $j$ and still open for assignment:
\begin{align*}
    Z_j &= \{b_{m_\ell}\mid 1\le \ell\le k \wedge m_\ell <\infty\}\,, \\
G_j' &= G[(A_{j}\setminus (A_{j-1}\cup Y_j))\cup (B_j\setminus (B_{j-1}\cup Z_j)]\,.
\end{align*}
We define, for all $\ell\in [k]$, a modified profit function $p'_\ell$ on the vertices of $G_j'$ which assigns the original profits to those vertices that still might be assigned to color $\ell$ without violating the definitions of $i_\ell$ and $m_\ell$.
We set for all $v\in V(G_j')$:
$$p'_{\ell}(v)=
    \begin{cases}
        0, & \text{if }v\in\left\{ a_{\max\{i_{\ell},u_{j-1}\}+1},\ldots,a_{u_{j}}\right\} ;\\
        0, & \text{if }i_\ell >0 \text{ and } v\in N_{G_j}(a_{i_\ell});\\
        0, & \text{if }v\in\left\{ b_{v_{j-1}+1},\ldots,b_{\min\{m_{\ell},v_{j}+1\}-1}\right\};\\
        0, & \text{if }m_\ell<\infty \text{ and } v\in N_{G_j}(b_{m_\ell});\\
        p_{\ell}(v), & \text{otherwise.}
    \end{cases}
$$
Also here, similarly to the case $j = 1$, we will need an independence property of the graph $G_j'$ that can be derived from the convexity of $G$.

\medskip
\noindent\textbf{Observation.} For each $\ell$, no two vertices of $G_j'$ with strictly positive value of the modified profit function $p_\ell'$ are adjacent.

\begin{proof}
Suppose by contradiction that $a_ib_{i'}\in E(G_j')$
where $p_\ell(a_i)>0$ and $p_\ell(b_{i'})>0$.
Then $i<i_\ell$ and $i'>m_\ell$.
Since the $k$-tuple $(m_1,\ldots, m_k)$ is compatible with $(i_1,\ldots, i_k)$, vertices $a_{i_\ell}$ and $b_{m_\ell}$ are nonadjacent in $G_j$.
Furthermore, by construction of $G_j$ vertices $a_{u_j}$ and $b_{i'}$ are adjacent in $G_j$ (and in $G$).
Now, as vertex $b_{i'}$ is adjacent in $G$ to both $a_i$ and $a_{u_j}$, the inequalities $i<{i_\ell}\le {u_j}$ and the convexity property imply that 
$b_{i'}$ is adjacent in $G$ (and thus in $G_j$) to $a_{i_\ell}$.
Finally, applying property~\ref{item:convex-3} to 
$m_\ell<i'$, we obtain that 
$N_{G_j}(b_{i'})\subseteq N_{G_j}(b_{m_\ell})$, 
which contradicts the fact that 
$a_{i_\ell}$ is adjacent in $G_j$ to $b_{i'}$ but not to 
$b_{m_\ell}$.
\end{proof}

Now we determine the set $\Pi_j'(m_1,\ldots, m_k)$ of all
possible profit profiles on the set $V(G_j')$ with respect to the modified profits $(p_1',\ldots, p_k')$.
In the computation of $\Pi_j'(m_1,\ldots, m_k)$ we can again completely ignore the edges, and hence also this set can be computed as described in Lemma~\ref{thm:edgeless}.

\paragraph{{Combining profit profiles after each increment of $j$.}}
To complete the construction of $\Pi_{j}(i_{1},\dots,i_{k})$ it remains to combine these profit profiles
for $G_j'$
with the profits implied by the choice of the two $k$-tuples on the $A$- and $B$-side (in sets $A_j\setminus A_{j-1}$ and $B_j\setminus B_{j-1}$, respectively) and with those computed recursively on the graph $G_{j-1}$.
Note that there are no edges in $G_j$ joining a vertex in $A_j\setminus A_{j-1}$ with a vertex in $B_{j-1}$.
Moreover, for every $k$-tuple $(m_1,\ldots, m_k)\in \mathcal{M}_j$, 
we already know both the largest vertex in $A_j$ (if any) and the smallest vertex in $B_j\setminus B_{j-1}$ (if any) that belongs to each $X_\ell$.

Let $\mathcal{I}_j = \mathcal{I}_j(i_1,\ldots, i_k)$ be the set of all $k$-tuples $(i_1',\ldots, i_k')\in \{0,1,\ldots, u_{j-1}\}^k$ for which $i_\ell' = i_\ell$ for all $\ell\in [k]$ such that $i_\ell\le u_{j-1}$. 
The intuition behind the definition of the set $\mathcal{I}_j$ is as follows. The corresponding $k$-tuples encode the choices for $i_\ell$ that were available already in the previous iteration for $j-1$ and therefore for each $(i_1',\ldots, i_k')\in \mathcal{I}_j$, the set $\Pi_{j-1}(i_1',\ldots, i_k')$ has already been computed. More precisely, consider a partial coloring $(X_1,\ldots, X_k)$ of $G_j$ such that $i_\ell = \max\{u\mid a_u\in X_\ell\}$ if $X_\ell\cap A \neq \emptyset$ and $i_\ell = 0$ if $X_\ell\cap A = \emptyset$. 
We consider three cases depending on the value of $i_\ell$:
\begin{itemize}
    \item If $i_\ell = 0$, then no vertex from $A_j$ is assigned to agent $\ell$ and thus the same condition will hold for the restriction of $X_\ell$ to $G_{j-1}$; thus we impose the condition $i_\ell' = 0 = i_\ell$ in this case.
\item If $0 < i_\ell \le u_{j-1}$, then we again want to require that $i_\ell' = i_\ell$ since in this case vertex $a_{i_\ell}$ will be the largest indexed vertex assigned to agent $\ell$ not only in $A_j$ but also in $A_{j-1}$.
\item Finally, if $i_\ell > u_{j-1}$, then the largest indexed vertex assigned to agent $\ell$ belongs to $A_j\setminus A_{j-1}$, which imposes no constraints on the largest indexed vertex assigned to agent $\ell$ from the set $A_{j-1}$.
\end{itemize}

The profit contribution added in the current iteration $j$ 
directly by the choice of $(i_{1},\dots,i_{k})$ will be denoted by the $k$-tuple $q^A$.
Similarly, for every choice of $\mu=(m_1,\ldots, m_k)\in \mathcal{M}_j$ the resulting profit profile is denoted by the $k$-tuple $q^B(\mu)$.
Formally, we have the following:
\begin{align*}
    q^A&=(q_1^A,\dots,q_k^A),\text{ where } q_\ell^A = \left\{\begin{array}{ll}
                 p_\ell(a_{i_\ell}), & \hbox{if $i_\ell\ge u_{j-1}+1$;} \\
                 0, & \hbox{otherwise;}
               \end{array}
             \right.\\
    q^B(\mu)&=(q_1^B,\dots,q_k^B),\text{ where }
    q_\ell^B = \left\{
               \begin{array}{ll}
                 p_\ell(b_{m_\ell}), & \hbox{if $m_\ell< \infty$;} \\
                 0, & \hbox{otherwise.}
               \end{array}
             \right.
\end{align*}
\begin{sloppypar}
Now we are ready to paste together all parts contributing to the set of profit profiles $\Pi_{j}(i_{1},\dots,i_{k})$ in iteration $j$ for the current guess $(i_{1},\dots,i_{k})$.
Namely, we take the union over all guesses from the preceding iteration $j-1$ which are compatible with $(i_{1},\dots,i_{k})$,
i.e.\ the union over all $k$-tuples $\tau=(i_1',\ldots, i_k')\in \mathcal{I}_j$.
Recall that any such $\tau$ coincides with the current guess for all indices $i_\ell \leq u_{j-1}$.
From the previous iteration we have at hand a set of potential profit profiles $\Pi_{j-1}(\tau)$ for each such $\tau$.
Moreover, we also take the union over all possible choices $\mu\in \mathcal{M}_j$, which are compatible with the current guess $(i_{1},\dots,i_{k})$ by definition of $\mathcal{M}_j$.
The possible contributions from the newly considered vertices $V(G_j\setminus G_{j-1})$ in compliance with $(i_{1},\dots,i_{k})$ and $\mu$ are added through $\Pi'_{j}(\mu)$.
Altogether this yields:
\begin{align}\label{eq:final}
    \Pi_{j}(i_{1},\dots,i_{k})=\bigcup_{\begin{array}{c}
{\scriptstyle \tau\in\mathcal{I}_{j},}\\
{\scriptstyle \mu\in\mathcal{M}_{j}}
\end{array}}\bigcup_{\begin{array}{c}
{\scriptstyle q\in \Pi_{j-1}(\tau),}\\
{\scriptstyle q'\in \Pi'_{j}(\mu)}
\end{array}}\left\{q+q'+q^{A}+q^{B}(\mu)\right\}
\end{align}
\end{sloppypar}
{Recall that, as the set of all profit profiles of partial $k$-colorings of $G$ equals the union, over all $(i_1,\ldots, i_k)\in \{0,1,\ldots, u_r\}^k$, of the sets $\Pi_r(i_1,\ldots, i_k)$,
assuming the correctness of equation (\ref{eq:final}),
this gives the solution of \textsc{Fair $k$-Division Under Conflicts} and concludes the proof.
}

\paragraph{{Correctness of equation \eqref{eq:final}.}}
First, consider a profit profile in $\Pi_j(i_1,\ldots, i_k)$. 
To see that the profile can be written as the sum $q+q'+q^A+q^B(\mu)$ for some $q\in \Pi_{j-1}(\tau)$ with $\tau\in \mathcal{I}_j$ and $q'\in \Pi'_j(\mu)$ with $\mu\in \mathcal{M}_j$, fix any partial $k$-coloring $(X_1,\ldots, X_k)$ of $G_j$ ``compatible with'' $(i_1,\ldots, i_k)$ and having the given profit profile.
For each color $\ell\in [k]$, we determine the vertex in $X_\ell\cap (B_j\setminus B_{j-1})$ (if any) with the smallest index. Such vertices define a $k$-tuple $\mu\in \mathcal{M}_j$ and define a set $Z_j$ that corresponds to a profit profile $q^B(\mu)$.
Similarly, the set of all vertices in $X_\ell\cap (A_j\setminus A_{j-1})$ with the largest index (if any), over all  $\ell\in [k]$, define a set $Y_j$ 
that corresponds to a profit profile $q^A$.
Vertices in $X_\ell\cap ((A_j\cup B_j)\setminus (Y_j\cup Z_j))$ correspond to an independent set in $G_j' = G[(A_{j}\setminus (A_{j-1}\cup Y_j))\cup (B_j\setminus (B_{j-1}\cup Z_j)]$, and all these independent sets form a partial $k$-coloring of $G_j'$ with profit profile $q'\in \Pi'_j(\mu)$.
Finally, vertices in $X_\ell\cap V(G_{j-1})$ form an independent set in $G_{j-1}$, and all these independent sets form a partial $k$-coloring of $G_{j-1}$ with profit profile $q\in \Pi_{j-1}(\tau)$. 
By construction, the profit profile of $(X_1,\ldots, X_k)$ is precisely the sum $q+q'+q^A+q^B(\mu)$.

\medskip
Conversely, consider some $q\in \Pi_{j-1}(\tau)$ with $\tau\in \mathcal{I}_j$ and $q'\in \Pi'_j(\mu)$ with $\mu = (m_1,\ldots, m_k)\in \mathcal{M}_j$.
Fix any partial $k$-coloring $(X_1,\ldots, X_k)$ of $G_{j-1}$ ``compatible with'' 
$\tau=(i_1',\ldots, i_k')$ having profit profile $q$ and set, as before, $Y_j = \{a_{i_\ell}\mid 1\le \ell\le k \wedge i_\ell> 0\}\setminus A_{j-1}$, $Z_j = \{b_{m_\ell}\mid 1\le \ell\le k \wedge m_\ell <\infty\}$, and $G_j' = G[(A_{j}\setminus (A_{j-1}\cup Y_j))\cup (B_j\setminus (B_{j-1}\cup Z_j)]$.
Fix any partial $k$-coloring $(X_1',\ldots, X_k')$ of $G_{j}'$ having a $p'$-profit profile equal to the profile $q'\in \Pi'_j(\mu)$ we considered at the beginning.
Furthermore, for all $\ell\in [k]$ and all $v\in X'_\ell$, we may assume that $p'(v)>0$ (otherwise we can simply remove $v$ from $X'_\ell$).
For each $\ell\in [k]$, let us denote by $Y^\ell$ the set $\{a_{i_\ell}\}$ if $i_\ell\in\{u_{j-1}+1\ldots, u_j\}$, and $Y^\ell = \emptyset$, otherwise.
For the considered $\mu$, we denote by $Z^\ell$ the set $\{b_{m_\ell}\}$ if $m_\ell<\infty$, and $Z^\ell = \emptyset$, otherwise.
It now suffices to show that for each $\ell\in [k]$, the set $X_\ell\cup X_\ell'\cup Y^\ell\cup Z^\ell$ is independent in $G$.
By construction, each of the sets $X_\ell$, $X_\ell'$, $Y^\ell$, and $Z^\ell$ is independent.
We justify that there are no edges between any two of these sets as follows.
\begin{itemize}
\item There are no edges in $G$ between $B_{j-1}$ and $A_j\setminus A_{j-1}$, hence there are no edges between $X_\ell\cap B$ and $Y^\ell\cup (X_\ell'\cap A_j)$. 
\item There are no edges between $X_{\ell}\cap A$ and $Z^{\ell}$.

Suppose for a contradiction that $m_\ell<\infty$ and $b_{m_\ell}$ is adjacent to a vertex $a\in X_{\ell}\cap A$.
Then $a = a_g$ for some $g\in[u_{j-1}]$ and, in particular, $i_{\ell}\ge i_{\ell}'\ge g$ and this $i_\ell>0$.
Note also that $b_{m_\ell}$ is non-adjacent to $a_{i_\ell}$, since $\mu= (m_1,\ldots, m_k)$ is compatible with $(i_1,\ldots, i_k)$.
Thus, $b_{m_\ell}$ is adjacent to $a_g$, non-adjacent to $a_{i_\ell}$, and adjacent to $a_{u_j}$.
Since $g\le i_\ell\le j$, we obtain a contradiction with the convexity of $G$.

\item There are no edges between $X_{\ell}\cap A$ and $X_\ell'\cap B$.

Suppose for a contradiction that there exists a pair of adjacent vertices $a\in X_{\ell}\cap A$ and $b\in X_\ell'\cap B$. 
Then $a = a_g$ for some $g\le [u_{j-1}]$ and $b = b_h$ for some $h\in \{v_{j-1}+1,\ldots, v_j\}$.
Again, we have $i_{\ell}\ge i_{\ell}'\ge g$ and thus $i_\ell>0$.
Since $b_h\in X_\ell'$, we have $p'(b_h)>0$ and thus $b_h$ is not adjacent to $a_{i_\ell}$.
Thus, $b_{h}$ is adjacent to $a_g$, non-adjacent to $a_{i_\ell}$, and adjacent to $a_{u_j}$.
Since $g\le i_\ell\le j$, we obtain a contradiction with the convexity of $G$.

\item There are no edges between $X_{\ell}'\cap A$ and $Z^{\ell}$.

This is true since if $m_\ell<\infty$, then all vertices $v\in X_{\ell}'\cap A$ satisfy $p'(v)>0$ and are thus non-adjacent to $b_{m_\ell}$ by the definition of the modified profit function.

\item Similarly, there are no edges between $X_{\ell}'\cap B$ and $Y^{\ell}$.

\item Finally, there are no edges between $Y^\ell$ and $Z^\ell$, since $\mu= (m_1,\ldots, m_k)$ is compatible with $(i_1,\ldots, i_k)$.
\end{itemize}
We conclude that equation~\eqref{eq:final} is correct.

\paragraph{{Time complexity analysis.}}
Concerning the running time, we restrict ourselves to a rough estimate.
The ordering of sets $A$ and $B$, along with the computation of set $U$ and indices $u_1,\ldots, u_r$ and $v_1,\ldots, v_r$ can be done in time $\mathcal{O}(|V(G)|+|E(G)|+|V(G)|\log |V(G)|)$, i.e., $\mathcal{O}(n^2)$.
Then, in each of the $\mathcal{O}(n)$ iterations over $j$ we iterate over each of the $\mathcal{O}(n^k)$ $k$-tuples $(i_{1},\dots,i_{k})\in \{0,1,\ldots, u_j\}^k$, and determine the set $\mathcal{M}_{j}$ of $\mathcal{O}(n^k)$ compatible $k$-tuples in $\mathcal{O}(n^k)$ time. (We first compute, in time $\mathcal{O}(kn)$, the index sets of non-neighbors in $B_j\setminus B_{j-1}$ of each vertex $a_{i_\ell}$ (whenever $i_\ell>0$) and then take the Cartesian product of these sets, each augmented by $\infty$.)
For every $\mu\in\mathcal{M}_{j}$ there are $\mathcal{O}((Q+1)^k)$ profit profiles in $\Pi'_{j}(\mu)$ which can be computed in
${\mathcal O}(n(Q+1)^k)$ time according to Lemma~\ref{thm:edgeless}.

The final combination of entries in (\ref{eq:final}) goes  over all
$\tau\in\mathcal{I}_{j}$, where $|\mathcal{I}_{j}|$ is in $\mathcal{O}(n^k)$,
and for every $\tau$ there are $\mathcal{O}((Q+1)^k)$ profit profiles in 
$\Pi_{j-1}(\tau)$.
Each of them is combined with each of $\mathcal{O}((Q+1)^k)$ profit profiles in $\Pi'_{j}(\mu)$ for all $\mathcal{O}(n^k)$ choices of $\mu$.
Thus, there are $\mathcal{O}(n^{2k}(Q+1)^{2k})$ candidates to be considered in (\ref{eq:final}) for inclusion in $\Pi_{j}(i_{1},\dots,i_{k})$.

This yields an overall running time of $\mathcal{O}(n^{3k+1}(Q+1)^{2k})$.
\end{proof}

If $G$ is not connected, then we compute for every component the set of all profit profiles for \textsc{Fair $k$-Division Under Conflicts} as described above and then aggregate these outputs.

Summarizing, we obtain the following generalization of Theorem~\ref{thm:convex}.

\begin{sloppypar}
\begin{coro}
\label{thm:convexdisc}
For every $k\ge 1$, \textsc{Fair $k$-Division Under Conflicts} is solvable in time \hbox{$\mathcal{O}(n^{3k+1}(Q+1)^{2k})$} for $n$-vertex convex bipartite conflict graphs $G$, where $Q = \max_{1\le j\le k}p_j(V(G))$.
\end{coro}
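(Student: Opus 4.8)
The plan is to reduce the disconnected case to Theorem~\ref{thm:convex} by processing the connected components of $G$ independently and then merging their profit-profile sets. First I would observe that every connected component $C$ of a convex bipartite graph is again convex bipartite: restricting the given linear order of $A$ to $A\cap V(C)$ preserves the property that each $b\in B\cap V(C)$ has a consecutive neighborhood, so the structural setup underlying Theorem~\ref{thm:convex} transfers verbatim to $C$. I would then split the components into two groups: the singletons (i.e.\ the isolated vertices of $G$), which together form an edgeless graph handled by Lemma~\ref{thm:edgeless}, and the components with at least two vertices, each of which is connected, convex bipartite, and free of isolated vertices, hence eligible for the algorithm of Theorem~\ref{thm:convex}. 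For each such component I would run that algorithm, which (as noted in its proof, where the answer is read off as the union of the sets $\Pi_r(i_1,\ldots,i_k)$) in fact computes the entire set of profit profiles of partial $k$-colorings of the component.

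Next I would merge. Since distinct components are vertex-disjoint and pairwise nonadjacent, a partial $k$-coloring of $G$ is exactly a choice of a partial $k$-coloring in each component, and its profit profile is the coordinatewise sum of the component profiles; this is precisely the aggregation justified by \cite[Lemma~13]{ourarx}. I would therefore combine the profile sets sequentially, replacing two sets $\Pi$ and $\Pi'$ by $\{q+q'\mid q\in\Pi,\, q'\in\Pi'\}$. Crucially, every coordinate of every profile stays in $\{0,1,\ldots,Q\}$ because $Q=\max_{\ell}p_\ell(V(G))$ bounds agent $\ell$'s total profit over all of $G$; hence each intermediate set has at most $(Q+1)^k$ elements, and combining two such sets costs $\mathcal{O}((Q+1)^{2k})$. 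Having obtained the full profile set of $G$, I would output $\max\{\min_{\ell\in[k]}q_\ell\mid (q_1,\ldots,q_k)\text{ in the set}\}$, i.e.\ the maximum satisfaction level.

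For the running time I would argue superadditively. Writing $n_i$ for the size of the $i$-th component of size at least two, Theorem~\ref{thm:convex} costs $\mathcal{O}(n_i^{3k+1}(Q+1)^{2k})$ on it, and since $3k+1\ge 1$ and $\sum_i n_i\le n$ we have $\sum_i n_i^{3k+1}\le(\sum_i n_i)^{3k+1}\le n^{3k+1}$, so the total work over all nontrivial components is $\mathcal{O}(n^{3k+1}(Q+1)^{2k})$; the isolated vertices add only $\mathcal{O}(n(Q+1)^k)$ by Lemma~\ref{thm:edgeless}. The merging performs at most $n$ pairwise combinations, each $\mathcal{O}((Q+1)^{2k})$, contributing $\mathcal{O}(n(Q+1)^{2k})$, and the final scan is $\mathcal{O}((Q+1)^k)$. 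All of these are dominated by $\mathcal{O}(n^{3k+1}(Q+1)^{2k})$, giving the claimed bound.

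I do not expect a serious obstacle here, since the heavy lifting is done by Theorem~\ref{thm:convex}; the only points requiring care are (a) confirming that components of a convex bipartite graph remain convex bipartite so that the theorem applies, (b) correctly invoking the disjoint-union merging lemma so that the profile set of $G$ is exactly the coordinatewise sumset of the component profile sets, and (c) the superadditivity estimate $\sum_i n_i^{3k+1}\le n^{3k+1}$ that keeps the component-wise running times within budget rather than summing to something larger.
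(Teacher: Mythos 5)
Your proposal is correct and follows essentially the same route as the paper: run the connected-case algorithm on each component, bound the total by the superadditivity estimate $\sum_i n_i^{3k+1}\le n^{3k+1}$, and merge the component profile sets by pairwise sumsets (the paper's appeal to Lemma~13 of~\cite{ourarx}), with the merging cost $\mathcal{O}((Q+1)^{2k})$ per combination dominated by the first phase. Your explicit treatment of singleton components via Lemma~\ref{thm:edgeless} and the check that components of a convex bipartite graph remain convex bipartite are small points the paper leaves implicit, but they do not constitute a different approach.
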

\end{sloppypar}

\begin{proof}

Let $G$ consist of $c$ connected components $G_i$, $i=1,\ldots,c$, with $|V(G_i)|=n_i$ and $\sum_{i=1}^c n_i=n$.
The running time is composed of two parts.
At first, the profit profile is computed for each of the $c$ components. 
The asymptotic running time can be bounded by
$$\sum_{i=1}^c n_i^{3k+1}(Q+1)^{2k} \leq
\left(\sum_{i=1}^c n_i\right)^{3k+1}\!\!(Q+1)^{2k}
=n^{3k+1}(Q+1)^{2k}.$$
Secondly, the $c$ profit profiles each of size  $\mathcal{O}((Q+1)^{k})$ have to be merged.
We start with the profit profiles determined for the first component and merge the profit profiles from the second component by considering every pair of profiles from first and second component and performing a vector addition to obtain a new profit profile.
Continuing this process, in every iteration the profit profiles of the next component are merged with the previously existing profiles.
Throughout this process the total number of profit profiles remains pseudo-polynomially bounded.
Finally, the best objective function value is determined by evaluating all profit profiles.
A more formal description of this approach was given in Lemma~13 of~\cite{ourarx}.
Altogether this second part requires $\mathcal{O}((c-1)(Q+1)^{2k})$ time, which is dominated by the effort of the first part.
\end{proof}

\section{Graphs of bounded clique-width}
\label{sec:bcw}

In this section we present a pseudo-polynomial time dynamic programming algorithm for \textsc{Fair $k$-Division Under Conflicts} for conflict graphs of {\em bounded clique-width}.
This is an improvement over the previous result for graphs of bounded treewidth, which was so far the only positive result for non-perfect graphs.

Clique-width, introduced 1993 in \cite{courengel93}, is a parameter defined by a construction process where only a limited number of vertex labels are available.
Vertices with the same label at some point must be treated uniformly in subsequent steps (see below). 
The clique-width $\textrm{cw}(G)$ of a graph $G$ is the minimum number of labels that suffice to construct $G$ in this way.
\NP-completeness and inapproximability of the clique-width of a graph were shown in \cite{felrosa09}.
For graphs of bounded clique-width many hard optimization problems admit polynomial-time algorithms, see, e.g., \cite{MR1739644,MR1905626,MR1973174,MR2323400}.

Relations between treewidth and clique-width were elaborated in
\cite{MR1743732}.
In particular, bounded treewidth $\textrm{tw}(G)$ of a graph $G$ implies bounded clique-width since 
$\textrm{cw}(G) \leq 3\cdot 2^{\textrm{tw}(G)}-1$ as shown by \cite{corneil05}.
However, the opposite implication is not true as can be seen from the family of complete graphs which have clique-width $2$ but treewidth $|V|-1$.

Another parameter of a graph $G$ related to treewidth is {\em rank-width} $\textrm{rw}(G)$ introduced in~\cite{MR2232389}.
Rank-width is also derived from a hierarchical decomposition of the graph.
Informally speaking, treewidth measures the width of a separation into two sides, whereas rank-width measures the rank of the adjacency matrix of the edges between the two sides of the separation. 
Without going into more details, let us just mention that it was shown in \cite{MR2232389} that $\textrm{rw}(G) \leq \textrm{cw}(G) \leq 2^{\textrm{rw}(G)+1}-1$.
Therefore, bounded clique-width is equivalent to bounded rank-width.

\smallskip
In the following we will describe the labelling process of the graph decomposition associated to clique-width in more detail.

A \emph{labeled graph} is a graph in which every vertex is assigned some label from $\mathbb{N}$. If all vertex labels belong to the set $[k]$, then we say that the graph is $k$-labeled. The \emph{clique-width} of a graph $G$ is defined as the smallest positive integer $k$ such that a $k$-labeled graph isomorphic to $G$ 
can be constructed with the following operations:
\begin{itemize}
\item $i(v)$: creating a new one-vertex graph with vertex $v$ labeled $i$,
\item $G\oplus H$: disjoint union of two already constructed labeled graphs $G$ and $H$,
\item $\eta_{i,j}$, for $i\neq j$: adding to $G$ all edges between vertices labeled $i$ and vertices labeled $j$,
\item $\rho_{i\to j}$, for $i\neq j$: relabeling every vertex labeled $i$ with label $j$.
\end{itemize}
A construction of a graph $G$ with the above four operations can be represented by  an algebraic expression, which is called a \emph{$k$-expression} if it
uses at most $k$ labels. 
Given a $k$-expression $\sigma$, we denote by $|\sigma|$ its encoding length.
A graph class $\mathcal{G}$ is said to be of \emph{bounded clique-width} if there exists a nonnegative integer $k$ such that each graph in $\mathcal{G}$ has clique-width at most $k$.

Polynomial-time algorithms for graphs with bounded clique-width are typically developed using dynamic programming based on a $k$-expression building the input graph.
If a $k$-expression is not available, then one can use any of the available algorithms in the literature for computing an expression with at most $f(k)$ labels for some exponential function $f$ (see~\cite{MR4490048,MR4334540,MR2479181,MR2232389}).
The currently fastest such algorithm is due to Fomin and Korhonen~\cite{MR4490048}; for an integer $k$ and an $n$-vertex graph $G$, it runs in time $2^{2^{O(k)}}n^2$ and either computes a $(2^{2k+1}-1)$-expression of $G$ or correctly determines that the clique-width of $G$ is more than $k$.

We can now proceed to prove the following theorem.

\begin{sloppypar}
\begin{theorem}\label{thm:boundedcliquewidth}
For every two positive integers $k$ and $\ell$, \textsc{Fair $k$-Division Under Conflicts} is solvable in time ${\mathcal O}\left(4^{k\ell}|\sigma|(Q+1)^{2k}\right)$, if the conflict graph $G$ has clique-width at most $\ell$ and is given by an \hbox{$\ell$-expression} $\sigma$, where $Q = \max_{1\le j\le k}p_j(V(G))$.
\end{theorem}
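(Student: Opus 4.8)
The plan is to run a bottom-up dynamic program over the parse tree of the given $\ell$-expression $\sigma$, maintaining for every subexpression a table of achievable profit profiles indexed by a compact \emph{color signature} that records, for each of the $\ell$ labels, which of the $k$ colors occur among the vertices currently carrying that label. Concretely, for a labeled graph $H$ produced by a subexpression and a function $f\colon [\ell]\to 2^{[k]}$, I would let $T_H(f)$ be the set of all profit profiles $(q_1,\dots,q_k)\in\{0,\dots,Q\}^k$ for which there is a partial $k$-coloring $(X_1,\dots,X_k)$ of $H$ with $q_c=p_c(X_c)$ for every $c\in[k]$ and such that, for each label $i\in[\ell]$, the set of colors used by the label-$i$ vertices is \emph{exactly} $f(i)$. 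Since each coordinate lies in $\{0,\dots,Q\}$ and there are $(2^k)^\ell=2^{k\ell}$ signatures, each node of the tree stores at most $2^{k\ell}(Q+1)^k$ profiles. The structural fact driving everything is that vertices sharing a label are indistinguishable under all remaining operations, so this signature is a sufficient statistic: since $\eta_{i,j}$ is applied only for $i\neq j$, two vertices of equal label are never made adjacent, and hence a color class can lose independence only when some future $\eta_{i,j}$ joins a label-$i$ vertex to a label-$j$ vertex of the same color — an event visible entirely from $f(i)$ and $f(j)$.

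The transitions would follow the four clique-width operations. For a creation $i(v)$ I record $k+1$ states: either $v$ is left unassigned (the all-empty signature, profile $(0,\dots,0)$) or $v$ is placed in color $c$ (the signature with $f(i)=\{c\}$ and all other labels empty, and profile $p_c(v)$ in coordinate $c$). For a disjoint union $G\oplus H$ I set $T_{G\oplus H}(f)=\{\,q^G+q^H : q^G\in T_G(f^G),\ q^H\in T_H(f^H)\,\}$, where the union ranges over all pairs $(f^G,f^H)$ with $f^G(i)\cup f^H(i)=f(i)$ for every label $i$; this is the only step combining two child tables and is the source of the $4^{k\ell}$ factor. For an edge insertion $\eta_{i,j}$ the signature is unchanged, but I discard every state with $f(i)\cap f(j)\neq\emptyset$, since precisely those colorings acquire a monochromatic edge, and I keep the profiles of the surviving states. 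For a relabeling $\rho_{i\to j}$ I push the signature forward by moving the colors of label $i$ into label $j$ (the new $j$-th coordinate becomes the union of the old $f(i)$ and $f(j)$, and the $i$-th coordinate becomes $\emptyset$), and for each target signature I take the union of the profile sets over all preimages.

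At the root one has $H=G$, and the set of all profit profiles of partial $k$-colorings of $G$ equals $\bigcup_f T_G(f)$; scanning this set and returning a profile maximizing $\min_{c\in[k]} q_c$ then solves \textsc{Fair $k$-Division Under Conflicts}. Correctness would be proved by induction on the parse tree, with the only nontrivial verification being that the exact-colors-per-label invariant is preserved by each operation and that the $\eta_{i,j}$ filter characterizes exactly which colorings remain feasible after the inserted edges; this is precisely where the ``same-label vertices are interchangeable'' observation of the first paragraph is used, and it is the main obstacle, the remaining operations being routine bookkeeping. For the running time, there are $\mathcal{O}(|\sigma|)$ operations; each $\oplus$ examines at most $4^{k\ell}$ pairs of signatures and, per pair, forms at most $(Q+1)^{2k}$ coordinatewise sums of profiles (all of which remain in $\{0,\dots,Q\}^k$ since each agent's total profit is at most $Q$), costing $\mathcal{O}\!\left(4^{k\ell}(Q+1)^{2k}\right)$, whereas each $\eta_{i,j}$ and $\rho_{i\to j}$ only filters or merges the $2^{k\ell}(Q+1)^k$ stored profiles and is dominated. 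Summing over the whole expression yields the claimed bound $\mathcal{O}\!\left(4^{k\ell}|\sigma|(Q+1)^{2k}\right)$.
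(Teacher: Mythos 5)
Your proposal is correct and takes essentially the same approach as the paper: its dynamic program stores, for each subexpression and each tuple $(L_1,\dots,L_k)$ of label sets per color class, the achievable profit profiles, which is exactly your signature $f\colon[\ell]\to 2^{[k]}$ read transposed, and your four transitions (creation; the $4^{k\ell}$-pair union at $\oplus$; the $f(i)\cap f(j)=\emptyset$ filter at $\eta_{i,j}$, matching the paper's condition $|\{i,j\}\cap L_s|\le 1$ for all $s$; and the pushforward at $\rho_{i\to j}$, matching the paper's union over preimages $L_s'\in\{L_s,(L_s\setminus\{j\})\cup\{i\},L_s\cup\{i\}\}$) coincide with the paper's Cases 1--4, with the same dominant $\mathcal{O}\bigl(4^{k\ell}(Q+1)^{2k}\bigr)$ cost at union nodes. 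Your forward-merge accounting of the relabeling step even sidesteps the paper's minor $Q\ge 2$ caveat, but the argument and the resulting bound are otherwise identical.
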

\end{sloppypar}

\begin{proof}
We extend the standard dynamic programming algorithm for graphs of bounded clique-width for the case $k = 1$, that is, the maximum weight independent set problem (see, e.g.,~\cite{Gurski}).
Given a partial $k$-coloring $c = (X_1,\ldots, X_k)$ of an $\ell$-labeled graph $H$, the \emph{label profile} of $c$ (with respect to $H$) is the $k$-tuple $(L_1,\ldots, L_k)$ where $L_j$ is the set of labels in $[\ell]$ appearing on some vertex of $X_j$, for all $j \in [k]$. 
For each labeled subgraph $H$ of $G$ that appears in the process of constructing $G$ using $\sigma$ and each $k$ label sets $L_1,\ldots, L_k\subseteq [\ell]$, we compute the set $P(H,L_1,\ldots, L_k)$ of all profit profiles $(q_1,\ldots, q_k)$ of partial $k$-colorings $c$ of $H$ such that the label profile of $c$ equals $(L_1,\ldots, L_k)$. 
We then have four cases depending on the type of $H$. 
In each case, we derive a formula of how to compute the set $P(H,L_1,\ldots, L_k)$ from the previously computed sets of this type.
         
\begin{enumerate}
    \item \textbf{$H$ is a one-vertex graph consisting of a vertex $v$ labeled $i$.}
   
   There are only $k+1$ partial $k$-colorings of $H$: the trivial partial $k$-coloring $\emptyset^k$ consisting of $k$ empty sets, and, for each $j\in [k]$, the partial $k$-coloring $c_j = (X_1,\ldots, X_k)$ where $X_j = \{v\}$ and $X_{j'} = \emptyset$ for all $j'\in [k]\setminus\{j\}$.
   The label profile of $\emptyset^k$ is $\emptyset^k$.
   For each $j\in [k]$, the label profile of $c_j$ is the $k$-tuple $(L_1,\ldots, L_k)$ where $L_j=\{i\}$  and $L_{j'}=\emptyset$ for all $j'\neq j$.
   Thus, denoting by $\textbf{e}_j(p_j(v))$ the $k$-tuple in $\mathbb{Z}_+^k$ with $j$-th coordinate equal to $p_j(v)$ and all the other coordinates equal to $0$, we have the following formula: 
$$P(H,L_1,\ldots, L_k)=\left\{
\begin{array}{ll}
\{\textbf{e}_j(p_j(v))\},& \textrm{if $L_j=\{i\}$ and $L_{j'}=\emptyset$ for all ${j'}\neq j$\,,}\\
\{(0,\dots, 0)\}, & \textrm{if $L_j=\emptyset$ for all ${j}\in [k]$\,,}\\
\emptyset,& {\rm otherwise}.
\end{array}    
\right.
$$

\smallskip
While in the remaining three cases, the assumptions on $H$ are different, we always describe how to compute the set $P(H,L_1,\ldots, L_k)$ for an arbitrary but fixed collection of $k$ label sets $L_1,\ldots, L_k\subseteq [\ell]$.

    \item \textbf{$H$ is the disjoint union of two labeled graphs $H_1$ and $H_2$.}
    
  Let $c=(X_1,\ldots, X_k)$ be a partial $k$-coloring of $H$ with label profile $(L_1,\ldots, L_k)$. Then for $i\in\{1,2\}$ we have that $c_i=(X_1\cap V(H_i),\ldots, X_k\cap V(H_i))$ is a partial $k$-coloring of $H_i$. Let us denote by $(L_1',\dots, L_k')$ and $(L_1'',\dots, L_k'')$ the label profiles of $c_1$ and $c_2$, respectively. Then $L_j = L_j'\cup L_j''$ for all $j\in[k]$. Furthermore, the converse direction holds as well: for any two partial $k$-colorings $c_1=(X_1',\ldots, X_k')$ and $c_2=(X_1'',\ldots, X_k'')$ of $H_1$ and $H_2$, respectively, 
     the $k$-tuple $c=(X_1'\cup X_1'', \dots, X_k'\cup X_k'') $ is a partial $k$-coloring of $H$ with label profile $(L_1'\cup L_1'',\dots, L_k'\cup L_k'')$, where $(L_1',\dots, L_k')$ and $(L_1'',\dots, L_k'')$ are the label profiles of $c_1$ and $c_2$, respectively. 
     This bijective correspondence yields the following formula:
   $$P(H,L_1,\ldots, L_k)=\bigcup_{}\{\mathbf{q_1}+\mathbf{q_2}\mid \mathbf{q_1}\in P(H_1,L_1',\dots, L_k'), \mathbf{q_2}\in P(H_2,L_1'', \dots, L_k'')\}\,,$$
   where the union is taken over all collections $(L_1',\dots, L_k')$ and $(L_1'',\dots, L_k'')$ of label sets such that $L_j'\cup L_j''=L_j$ for all $j\in [k]$.
      
 \item \textbf{$H$ is obtained from a labeled graph $H'$ by adding all edges between vertices labeled $i$ and vertices labeled $j$ where $i \neq j$.}

Assume first that there exists some $s\in [k]$ such that $\{i,j\}\subseteq L_s$ and let $c=(X_1,\dots, X_k)$ be a partial $k$-coloring of $H$ with label profile $(L_1,\dots, L_k)$. Since $\{i,j\}\subseteq L_s$, there are vertices $v_1$ and $v_2$ of $H$ labeled $i$ and $j$, respectively, such that $\{v_1,v_2\}\subseteq X_s$. 
By the assumption on $H$ all vertices labeled $i$ are adjacent in $H$ to all vertices labeled $j$, so it is not possible that $\{v_1,v_2\}\subseteq X_s$, since $X_s$ is an independent set in $H$; a contradiction. 
It follows that there is no partial $k$-coloring of $H$ with label profile $(L_1,\dots, L_k)$, so in this case $P(H,L_1,\ldots, L_k)=\emptyset.$

Assume now that for every $s\in[k]$ we have that $|L_s\cap \{i,j\}|\le 1$.
In this case, every partial $k$-coloring of $H$ with label profile $(L_1,\dots, L_k)$ is also a partial $k$-coloring of $H'$ with the same label profile (with respect to $H'$), and vice versa.
It follows that $P(H,L_1,\ldots, L_k)=P(H', L_1,\dots, L_k)$.
 
Altogether, we have the following equality:
$$P(H,L_1,\ldots, L_k)=\left\{
\begin{array}{ll}
P(H', L_1,\dots, L_k),& \hbox{if $|\{i,j\}\cap L_s|\le 1$ for all
$s\in [k]$},\\
\emptyset,& {\rm otherwise}.
\end{array}    
\right.
$$

\item \textbf{$H$ is obtained from a labeled graph $H'$ by relabeling all vertices labeled $i$ to vertices labeled $j$.}

Let $c=(X_1,\dots, X_k)$ be a partial $k$-coloring of $H$ with label profile $(L_1,\dots, L_k)$. 
Observe that it follows from the assumption on $H$ that no vertex in $H$ has label $i$. 

If there exists some $s\in[k]$ such that $i\in L_s$, then there is a vertex $v\in X_s$ labelled $i$; a~contradiction. It follows that there is no partial $k$-coloring of $H$ with label profile $(L_1,\dots, L_k)$, and we have that $P(H, L_1,\dots, L_k)=\emptyset$ in this case. 

Assume now that for all $s\in[k]$ we have that $i\notin L_s$. 
Let $I=\{s\in[k]\mid j\in L_s\}$ and consider an arbitrary $s\in I$. 
The vertices in $X_s$ form an independent set in $H$ and thus also in $H'$. 
Since $j\in I$, the set $L_s$ contains $j$ and therefore there exists a vertex $v\in X_s$ such that the label of $v$ in $H$ is $j$.
Thus, the set $X_{sj}$ of all vertices in $X_s$ labeled $j$ in $H$ is nonempty.
Furthermore, since the label in $H'$ of any vertex in $X_{sj}$ is either $i$ or $j$, the label set of $X_s$ in $H'$ depends on whether there exists a vertex in $X_{sj}$ labeled $i$ in $H'$ and whether there exists a vertex in $X_{sj}$ labeled $j$ in $H'$. 
More precisely, the dependency is as follows.
\begin{itemize}
\item If there exists a vertex in $X_{sj}$ labeled $i$ in $H'$ as well as one labeled $j$ in $H'$, then the label set of $X_s$ in $H'$ is $L_s\cup \{i\}$ (recall that $j\in L_s$).
\item If all vertices in $X_{sj}$ are labeled $i$ in $H'$, then the label set of $X_s$ in $H'$ is $(L_s\setminus \{j\})\cup \{i\}$.
\item If all vertices in $X_{sj}$ are labeled $j$ in $H'$, then the label set of $X_s$ in $H'$ is $L_s$.
\end{itemize}
We conclude that the vertices of $X_s$ form in $H'$ an independent set with label set being equal either to $L_s$, to $(L_s\setminus \{j\})\cup \{i\}$, or to $L_s\cup \{i\}$.
Therefore, $c$ is a partial coloring of $H'$ with label profile $(L_1',\dots, L_k')$ such that for all $s\in I$ we have $L_s'\in\{L_s, (L_s\setminus \{j\})\cup \{i\}, L_s\cup \{i\}\}$, and for all $s\in [k]\setminus I$ we have $L_s'=L_s$. 
Conversely, for any $k$ label sets $L_1',\dots, L_k'\subseteq [\ell]$ such that $L_s'\in\{L_s, (L_s\setminus \{j\})\cup \{i\}, L_s\cup \{i\}\}$ for all $s\in I$ and $L_s'=L_s$ for all $s\in [k]\setminus I$, any partial coloring of $H'$ with label profile $(L_1',\dots, L_k')$ is a partial coloring of $H$ with label profile $(L_1,\ldots, L_k)$. 

Altogether, we thus obtain the following equality:
$$P(H,L_1,\ldots, L_k)=
 \left\{
\begin{array}{ll}
   \emptyset,  & \hbox{if }  i\in L_s \hbox{ for some } s\in[k] \\
   \bigcup P(H', L_1',\dots, L_k'),  & \hbox{otherwise},
\end{array}
\right. 
$$
where the union in the second case is taken over all $k$-tuples of label sets $L_1',\dots, L_k'\subseteq [\ell]$ such that for all $s\in I$ we have $L_s'\in\{L_s, (L_s\setminus \{j\})\cup \{i\}, L_s\cup \{i\}\}$, and for all $s\in [k]\setminus I$ we have $L_s'=L_s$. 
\end{enumerate}
\paragraph{Time complexity analysis.}
From the $\ell$-expression $\sigma$ we compute in time $|\sigma|$ a rooted tree $T$ describing the construction of $G$. Each node of $T$ corresponds to a labeled subgraph $H$ of $G$. 
For each such subgraph $H$ we consider all the $2^{\ell k}$ different collections of $k$ label sets $(L_1,\dots, L_k)$, obtained by choosing a subset of $[\ell]$ for each coordinate.
We explain the time complexity separately for Case $2$. 
For Case $2$, we can initialize all the sets $P(H,L_1, \dots, L_k)$ to be empty and iterate
over all $4^{k\ell}$ pairs of collections $(L_1',\dots, L_k')$ and $(L_1'',\dots, L_k'')$ of label sets of $H_1$ and $H_2$. 
For each such iteration we add to $P(H,L_1, \dots, L_k)$, where $L_j = L_j'\cup L_j''$ for all $j\in [k]$, the elements of the set $\{\mathbf{q_1}+\mathbf{q_2}\mid \mathbf{q_1}\in P(H_1,L_1',\dots, L_k'), \mathbf{q_2}\in P(H_2,L_1'', \dots, L_k'')\}$ in time $(Q+1)^{2k}$.
Hence, the overall time complexity for a graph $H$ in Case $2$ is ${\mathcal O}(4^{k\ell}(Q+1)^{2k})$.
For the remaining three cases, we estimate the running time separately for each set $P(H,L_1,\ldots, L_k)$.
The expressions in the formulas for computing $P(H,L_1,\ldots, L_k)$ can be evaluated in time $\mathcal{O}(k\ell)$ in Case $1$, in time ${\mathcal O}(k\ell+(Q+1)^{k})$ in Case $3$, and in time ${\mathcal O}(k\ell\cdot 3^k\cdot (Q+1)^{k})$ in Case $4$.
Since Case $4$ dominates the other two cases, the overall time complexity for a graph $H$ resulting from Cases $1$, $3$, and $4$ is given by ${\mathcal O}(2^{k\ell} \cdot k\ell\cdot 3^k(Q+1)^{k})$.
Since $k\ell\le 2^{k\ell}$ and $3^k\le (Q+1)^{k}$ for all $Q\ge 2$, the overall time complexity of Cases $1$, $3$, and $4$ is dominated by the effort for Case $2$, which yields the claimed running time bound.
(The special case $Q=1$ would imply that for each agent, only one item has a non-zero profit. 
This could be solved trivially in time $\mathcal{O}(k)$.)
\end{proof}

We conclude this section with some remarks about another, more general solution approach to \textsc{Fair $k$-Division Under Conflicts} for graphs of bounded clique-width.
The unweighted version of \textsc{Fair $k$-Division Under Conflicts} (in its decision version) takes a graph $G$ and an integer $q$ as input and asks about the existence of a partial $k$-coloring in $G$ in which all the color classes have cardinality at least $q$.
The existence of a pseudopolynomial-time algorithm for this problem on graphs with bounded clique-width follows from a metatheorem of Courcelle and Durand~\cite[Theorem 27]{MR3459604}, and it is plausible that with a suitable adaptation of their approach, a solution for the general \textsc{Fair $k$-Division Under Conflicts} problem might also be developed.
However, as the algorithms constructed in~\cite{MR3459604} are very general, their running times are not specified precisely.
In contrast, our algorithm given in the proof of Theorem~\ref{thm:boundedcliquewidth} is directly tailored for \textsc{Fair $k$-Division Under Conflicts} and it is not difficult to analyze its running time.

\section{Graphs of bounded tree-independence number}
\label{sec:tin}

A \emph{tree decomposition} of a graph $G$ is a pair
$(T, \{B_t\}_{t\in V(T)})$ consisting of a tree $T$ and an assignment of a set $B_t\subseteq V(G)$ called a \emph{bag} to each node of $T$ such that (i) every vertex of $G$ is contained in a bag, (ii) for every edge $\{u,v\}\in E(G)$ there exists a bag containing both $u$ and $v$, and (iii) for every vertex $u\in V(G)$ the subgraph of $T$ induced by the set $\{t\in V(T)\,:\,u\in B_t\}$ is connected.
In our previous work~\cite{ourarx}, pseudopolynomial  algorithms for {\sc Fair $k$-Division Under Conflicts} were developed for the case when the conflict graph is either chordal or has bounded treewidth. 
Both algorithms are based on a dynamic programming approach along a tree decomposition of the conflict graph.
In the case of chordal graphs, each bag is a clique, while 
in the case of bounded treewidth, bags of the tree decomposition have bounded cardinality.
In both cases, the approach works because any optimal solution, that is, a partial $k$-coloring $(X_1,\ldots, X_k)$ of $G$, intersects each bag only in a bounded number of vertices.

This key property allows in fact for a generalization of both algorithms in a unified way, by requiring the conflict graph $G$ to be equipped with a tree decomposition with bounded \emph{independence number}, that is, the maximum size of an independent set of $G$ contained in a bag.
The minimum independence number of a tree decomposition of a graph $G$ is called the \emph{tree-independence number} of $G$ and denoted by $\tin(G)$.
This parameter was introduced independently by Yolov~\cite{MR3775804} and by Dallard, Milani{\v{c}}, and {\v{S}}torgel~\cite{dallard2022firstpaper}.
Graph classes with bounded treewidth have bounded tree-independence number, but bounded tree-independence number also holds for several classes of dense graphs such as chordal graphs, circular-arc graphs, or more generally \emph{$H$-graphs}, that is, intersection graphs of connected subgraphs of a subdivision of a fixed multigraph $H$ (see~\cite{MR1172354,MR4249058,MR4332111,MR4141534}) or intersection graphs of connected subgraphs of a graph with bounded treewidth~\cite{MR1642971}, classes of graphs in which all minimal separators are of bounded size~\cite{MR1852483}, and classes of graphs excluding a single fixed graph $H$ as an induced minor, where $H$ is either $K_5^-$, $W_4$, or a complete bipartite graph $K_{2,n}$ for some $n$~\cite{dallard2022secondpaper}.
Furthermore, several recent works (see~\cite{MR4187151,MR4276552,MR4456180}) use tree decompositions with special properties of bags that imply bounded tree-independence number.

Yolov~\cite{MR3775804}, Dallard, Milani{\v{c}}, and {\v{S}}torgel~\cite{dallard2022firstpaper}, and Milani{\v{c}} and  Rz\c{a}\.{z}ewski~\cite{milanicRzazewski2022} identified several algorithmic problems that can be solved in polynomial time if the input graph is equipped with a tree decomposition with bounded independence number.
These include the Maximum Weight Independent Packing problem, which is a common generalization of the Independent Set and Induced Matching problems, and the problem of finding a large induced sparse subgraph satisfying an arbitrary but fixed property expressible in counting monadic second-order logic.
As shown by Dallard et al.~\cite{dallard2022computing}, for every constant $k \ge 4$ it is \NP-complete to decide if a given graph has tree-independence number at most $k$.
However, for algorithmic applications of bounded tree-independence number, this is not necessarily an obstacle, since Dallard et al.~also gave an algorithm that, given an $n$-vertex graph $G$ and an integer $k$, in time $2^{\mathcal{O}(k^2)}n^{\mathcal{O}(k)}$ either outputs a tree decomposition of $G$ with independence number at most $8k$, or determines that the tree-independence number of $G$ is larger than~$k$.
 
The approach developed in our previous work~\cite{ourarx} leading to dynamic programming algorithms for {\sc Fair $k$-Division Under Conflicts} in chordal graphs and graphs with bounded treewidth can also be used to prove the following more general result.  
 
\begin{theorem}\label{thm:boundedtree-alpha}
For every two positive integers $k$ and $\ell$, \textsc{Fair $k$-Division Under Conflicts} is solvable in time $\mathcal{O}(bn^{k\ell+1}(Q+1)^{2k})$ if the input $n$-vertex conflict graph $G$ is equipped with a tree decomposition with $b$ bags and independence number at most $\ell$, where $Q = \max_{1\le j\le k}p_j(V(G))$.
\end{theorem}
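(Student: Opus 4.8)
The plan is to adapt the bounded-treewidth dynamic program from~\cite{ourarx} by exploiting the single structural feature that makes it work and that survives in this more general setting: a partial $k$-coloring meets every bag in only boundedly many vertices. Concretely, if $(X_1,\ldots,X_k)$ is a partial $k$-coloring of $G$ and $B_t$ is a bag, then each $X_j\cap B_t$ is an independent set contained in $B_t$, so $|X_j\cap B_t|\le \ell$, whence $\bigl|(\bigcup_j X_j)\cap B_t\bigr|\le k\ell$. Thus the only information a coloring exposes at a bag is a proper partial colouring of that bag with at most $k\ell$ colored vertices, and the number of such objects per bag is $O(n^{k\ell})$: choose the at most $k\ell$ colored vertices in $\binom{n}{\le k\ell}=O(n^{k\ell})$ ways and assign each one of $k$ colors, the hidden constant depending on $k$ and $\ell$. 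This replaces the $O(1)$ bound on the number of bag states in the bounded-treewidth case and is the crux of the whole argument.

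First I would convert the given decomposition into a \emph{nice} tree decomposition (root; leaf, introduce, forget, join nodes). Since every new bag is a subset of some original bag, the independence number stays at most $\ell$, and the standard construction produces $O(bn)$ nodes. For a node $t$ let $G_t$ be the subgraph induced by the vertices appearing in the subtree rooted at $t$, and for each \emph{bag state} $\phi_t$, i.e.\ a partial function $B_t\to[k]$ whose color classes are independent in $G[B_t]$ (so its domain has size at most $k\ell$), define $P(t,\phi_t)$ as the set of profit profiles $(p_1(X_1),\ldots,p_k(X_k))$ of partial $k$-colorings $(X_1,\ldots,X_k)$ of $G_t$ whose trace on $B_t$ equals $\phi_t$. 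Each such set lies in $\{0,\ldots,Q\}^k$ and so has at most $(Q+1)^k$ elements. The transitions are the familiar ones. At an introduce node adding $v$, each child state is extended by leaving $v$ uncolored or by assigning it a color consistent with its neighbors already present in the bag (all neighbors of $v$ in $G_t$ lie in $B_t$), updating the relevant coordinate by $p_j(v)$. At a forget node removing $v$, one takes the union of the child sets over all ways $v$ was colored or omitted compatibly with $\phi_t$. At a join node with children $t_1,t_2$ and common bag, one sets
\[
P(t,\phi_t)=\bigl\{\,q_1+q_2-q_{\phi_t}\;\bigm|\;q_1\in P(t_1,\phi_t),\,q_2\in P(t_2,\phi_t)\,\bigr\},
\]
where $q_{\phi_t}$ is the profit profile of the colored bag vertices, subtracted to avoid double-counting. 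The answer is obtained by scanning $P(r,\phi_r)$ over all root states $\phi_r$ and maximizing $\min_j q_j$ over all stored profiles.

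Correctness follows the standard tree-decomposition argument: by property~(iii) a vertex never reappears once forgotten, so its color is fixed and its profit already recorded in the profile; and at a join node the vertices of $G_{t_1}$ and $G_{t_2}$ outside the common bag are separated by $B_t$ and hence pairwise non-adjacent, so two partial colorings agreeing on the bag glue into a valid partial $k$-coloring of $G_t$. Because the only data passed up the tree is the trace on the current bag together with the achievable profit profiles, any two colorings of $G_t$ with the same trace are interchangeable for all later steps, which is exactly what licenses the dynamic program.

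For the running time there are $O(bn)$ nodes and $O(n^{k\ell})$ bag states per node. Join nodes are the bottleneck: for each state one combines two profile sets of size at most $(Q+1)^k$ at cost $(Q+1)^{2k}$, while introduce and forget nodes cost only $O(k(Q+1)^k)$ per state. Hence each node costs $O(n^{k\ell}(Q+1)^{2k})$ and the total is $O(bn\cdot n^{k\ell}(Q+1)^{2k})=O(bn^{k\ell+1}(Q+1)^{2k})$, as claimed. The hard part will be precisely the state-count bound of the first paragraph; everything downstream is a routine, if careful, reprise of the bounded-treewidth program, the only subtle points being to check that niceness preserves the independence bound and that it is $k\ell$, rather than the bag size, that governs the number of states.
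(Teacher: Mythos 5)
Your proposal is correct and takes essentially the same approach as the paper: a nice tree decomposition with $\mathcal{O}(bn)$ nodes (whose bags, being subsets of original bags, keep independence number at most $\ell$), bag states given by partial $k$-colorings of the bag---of which there are $\mathcal{O}(n^{k\ell})$ because each color class meets a bag in at most $\ell$ vertices---and profit-profile sets of size at most $(Q+1)^k$ combined via the standard introduce/forget/join recurrences at cost $\mathcal{O}((Q+1)^{2k})$ per state. The only difference is one of presentation: the paper defers the recurrences and their correctness to its predecessor \cite{ourarx}, whereas you spell them out explicitly; the crucial state-count bound and the final time accounting coincide.
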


\begin{proof}
The approach is a straightforward adaptation of the approach used in Sections 3.3 and 3.4 of~\cite{ourarx}, hence we only explain the main ideas.
Given a tree decomposition of $G$ with $b$ bags and independence number at most $\ell$, we first invoke~\cite[Lemma 5.1]{dallard2022firstpaper} to compute in time $\mathcal{O}(n^2b)$ a nice rooted tree decomposition $\mathcal{T} = (T, \{B_t\}_{t\in V(T)})$ of $G$ with $\mathcal{O}(nb)$ bags and independence number at most~$\ell$ (we refer to~\cite{MR3380745} for the definition of a nice tree decomposition).
For a node $t$ of $T$, let $V_t$ denote the union of all bags $B_{t'}$ such that $t'\in V(T)$ is a (not necessarily proper) descendant of $t$ in $T$.
In our dynamic programming approach we then traverse the tree $T$ bottom-up and compute, for every node $t\in V(T)$ and every partial $k$-coloring $c$ of the subgraph of $G$ induced by $B_t$, the family $P(t,c)$ of all profit profiles of partial $k$-colorings of the graph $G[V_t]$ that agree with $c$ on $B_t$.
The maximum satisfaction level over all profit profiles in the set $P(c,t)$ where $t$ is the root of $T$ will give the optimal value of \textsc{Fair $k$-Division Under Conflicts} for ($G,p_1,\ldots, p_k)$.

\paragraph{Time complexity analysis.}
Since each bag induces a subgraph with independence number at most $\ell$, each partial $k$-coloring $c = (X_1,\ldots, X_k)$ of the subgraph of $G$ induced by $B_t$ satisfies $|X_j|\le \ell$ for all $j\in [\ell]$.
It follows that the number of such partial $k$-colorings is bounded by $\mathcal{O}(n^{k\ell})$.
For a given node $t$ of $T$ and a partial $k$-coloring $c$ of $G[B_t]$, the family $P(t,c)$ of profit profiles of partial $k$-colorings of the graph $G[V_t]$ that agree with $c$ on $B_t$ can be computed from the analogous families already computed at the children of $t$ (if any).
This is done by using exactly the same recurrence relations as the ones used for graphs of bounded treewidth, see Section 3.4 of~\cite{ourarx}.
Indeed, the proof of correctness of those relations does not depend on the structure of the subgraphs induced by the bags, the only difference resulting from replacing bounded treewidth with bounded tree-independence number is in the running time.
For a given node $t$ of $T$ and a partial $k$-coloring $c$ of $G[B_t]$, the set $P(t,c)$ can be computed in time  $\mathcal{O}((Q+1)^{2k})$.
Since $T$ has $\mathcal{O}(b n)$ nodes and for each node $t$ there are $\mathcal{O}(n^{k\ell})$ partial $k$-colorings $c$ of the graph $G[B_t]$, the total number of families $P(t,c)$ to compute is of the order $\mathcal{O}(b n^{k\ell+1})$.
The claimed time complexity of the algorithm follows. 
\end{proof}

The case of chordal graphs corresponds to $\ell = 1$ and $b =\mathcal{O}(n)$ (see, e.g.,~\cite{MR1971502}) and indeed, for these values of $\ell$ and $p$, Theorem~\ref{thm:boundedtree-alpha} implies~\cite[Theorem 19]{ourarx}.
For graphs of bounded treewidth Theorem~\ref{thm:boundedtree-alpha} also implies a pseudopolynomial algorithm, although its running time does not match the corresponding tailor-made result of~\cite[Theorem 20]{ourarx}.

\section{Conclusions}\label{sec:conc}

	In this paper we continued the study of the computational complexity of \textsc{Fair $k$-Division Under Conflicts} introduced in our previous paper \cite{ourarx}.
	We managed to contribute three new dynamic programming algorithms running in pseudo-polynomial time and allowing the construction of an FPTAS.
 First, we answered a question posed in \cite{ourarx} by giving an algorithm for convex bipartite conflict graphs.
	This extends the earlier result for biconvex bipartite graphs, although it employs a totally different algorithmic strategy.
 Second, we gave algorithms for conflict graphs of bounded clique-width or bounded tree-independence number. 
	These results replace the previously derived algorithm for graphs of bounded treewidth as the currently most general positive results for non-perfect conflict graphs.
	Note that all of these dynamic programming algorithms also permit the construction of fully polynomial time approximation schemes (FPTAS).

It would be interesting to identify further graph width parameters leading to pseudopolynomial algorithms for \textsc{Fair $k$-Division Under Conflicts} and similar problems. 
{One such parameter, kindly brought to our attention by an anonymous reviewer, is thinness (see~\cite{MR2294675}),
for which the framework of Bonomo and de Estrada~\cite{MR3958231} can be adapted to problems such as \textsc{Fair $k$-Division Under Conflicts}, leading to an alternative pseudo-polynomial-time algorithm for the class of convex bipartite graphs, and even for the more general class of interval bigraphs~\cite{MR4433312}, which have thinness at most $2$ (see~\cite{MR4606115}).
This answers another question posed in \cite{ourarx}.

A natural width parameter, more general than thinness and possibly also leading to pseudopolynomial algorithms for \textsc{Fair $k$-Division Under Conflicts}, is mim-width (see~\cite{MR4087200}), as suggested to us by Andrea Munaro (personal communication, 2020).
An algorithmic metatheorem for this parameter was recently developed by Bergougnoux, Dreier, and Jaffke~\cite{MR4538078}; however, the theorem does not capture max-min problems such as \textsc{Fair $k$-Division Under Conflicts} and it is an interesting question whether an appropriate adaptation is possible.

 Finally, since \textsc{Fair $k$-Division Under Conflicts} is $\textrm{NP}$-hard on general bipartite conflict graphs (see~\cite{ourarx}), using the following chain of inclusions: convex bipartite $\subseteq$
	interval bigraph $\subseteq$ 
	chordal bipartite $\subseteq$ bipartite, we see that the complexity of \textsc{Fair $k$-Division Under Conflicts} in the class of chordal bipartite graphs remains an interesting open problem for future research.
 }

\subsubsection*{Acknowledgements.}

The authors are grateful to \"Oznur Ya\c{s}ar Diner for sharing the slides of her talk at CTW 2020 (see~\cite{Diaz2021}), which inspired our approach for convex bipartite graphs, to Mamadou M.~Kant\'e for helpful discussions and for bringing to their attention the work~\cite{MR3459604}, to Benjamin Bergougnoux, Lars Jaffke, and Andrea Munaro for insightful discussions on mim-width, and to Nevena Piva\v{c} for initial discussions on the topics of the paper. 
The authors are also grateful to the anonymous reviewers for their helpful comments.

The work of this paper was done in the framework of two bilateral projects between University of Graz and University of Primorska, financed by the OeAD (
   SI 31/2020 and SI 13/2023) and the Slovenian Research Agency (
   BI-AT/20-21-015 and BI-AT/23-24-009).
	The authors acknowledge partial support of the Slovenian Research Agency (I0-0035, research programs P1-0285, P1-0383, and P1-0404, research projects N1-0102, N1-0160, N1-0210, J1-3001, J1-3002, J1-3003, J1-4008, J1-4084, and J5-4596) and 
 the Republic of Slovenia (Investment funding of the Republic of Slovenia and the European Union of the European Regional Development Fund)
 and by the Field of Excellence ``COLIBRI'' at the University of Graz and by the Federal Ministry for Digital and Economic Affairs of the Republic of Austria through the COIN project FIT4BA.


\newpage

\bibliographystyle{abbrv}
\bibliography{Fair_allocation_structured_conflicts}

\end{document}